\documentclass[11pt]{article}

\usepackage[letterpaper,margin=1in]{geometry}
\usepackage{amsmath,amssymb,amsthm,mathtools}
\usepackage{aliascnt}
\usepackage[ruled,vlined,linesnumbered]{algorithm2e}
\usepackage{booktabs}
\usepackage{enumitem}
\usepackage{etoolbox}
\usepackage{microtype}
\usepackage{xcolor}
\usepackage{placeins}
\usepackage[colorlinks=true,linkcolor=blue!55!black,citecolor=blue!55!black,
urlcolor=blue!55!black]{hyperref}
\usepackage[nameinlink,noabbrev,capitalize]{cleveref}

\allowdisplaybreaks
\setlist[itemize]{leftmargin=1.7em}
\setlist[enumerate]{leftmargin=2.1em}

\newtheorem{theorem}{Theorem}[section]
\newaliascnt{lemma}{theorem}
\newtheorem{lemma}[lemma]{Lemma}
\aliascntresetthe{lemma}
\newaliascnt{corollary}{theorem}
\newtheorem{corollary}[corollary]{Corollary}
\aliascntresetthe{corollary}
\newaliascnt{proposition}{theorem}

\aliascntresetthe{proposition}
\theoremstyle{definition}
\newaliascnt{definition}{theorem}
\newtheorem{definition}[definition]{Definition}
\aliascntresetthe{definition}
\theoremstyle{remark}
\newaliascnt{remark}{theorem}
\newtheorem{remark}[remark]{Remark}
\aliascntresetthe{remark}

\crefname{algocf}{Algorithm}{Algorithms}
\Crefname{algocf}{Algorithm}{Algorithms}
\crefname{theorem}{Theorem}{Theorems}
\Crefname{theorem}{Theorem}{Theorems}
\crefname{lemma}{Lemma}{Lemmas}
\Crefname{lemma}{Lemma}{Lemmas}
\crefname{corollary}{Corollary}{Corollaries}
\Crefname{corollary}{Corollary}{Corollaries}
\crefname{proposition}{Proposition}{Propositions}
\Crefname{proposition}{Proposition}{Propositions}
\crefname{definition}{Definition}{Definitions}
\Crefname{definition}{Definition}{Definitions}
\crefname{remark}{Remark}{Remarks}
\Crefname{remark}{Remark}{Remarks}
\crefname{equation}{Equation}{Equations}

\newcommand{\M}{\mathcal{M}}
\newcommand{\I}{\mathcal{I}}
\newcommand{\nS}{\overline{S}}
\newcommand{\Set}[1]{\lbrace #1 \rbrace}
\newcommand{\Aact}{A_{\mathrm{act}}}
\newcommand{\Bact}{B_{\mathrm{act}}}

\SetKwInput{KwInput}{Input}
\SetKwInput{KwOutput}{Output}

\SetKwFunction{HasIn}{HasIn}
\SetKwFunction{HasOut}{HasOut}
\SetKwFunction{FindIn}{FindIn}
\SetKwFunction{FindOut}{FindOut}
\SetKwFunction{ListIn}{ListIn}
\SetKwFunction{ListOut}{ListOut}
\SetKwFunction{CollectOut}{CollectOut}
\SetKwFunction{ClassifyColumns}{ClassifyColumns}
\SetKwFunction{Emit}{Emit}
\SetKwFunction{MinLoadNeighbor}{MinLoadNeighbor}
\SetKwFunction{AssignWitness}{AssignWitness}
\SetKwFunction{InitializeZeroRows}{InitializeZeroRows}
\SetKwFunction{DeleteColumn}{DeleteColumn}
\SetKwFunction{DeactivateRow}{DeactivateRow}
\SetKwFunction{RepairColumns}{RepairColumns}
\SetKwFunction{InitializeZeroColumns}{InitializeZeroColumns}
\SetKwFunction{DeleteRow}{DeleteRow}
\SetKwFunction{DeactivateColumn}{DeactivateColumn}
\SetKwFunction{Another}{Another}
\SetKwFunction{Enumerate}{Enumerate}
\SetKwFunction{RecEnumerate}{RecEnumerate}

\SetKwProg{Function}{Function}{:}{}
\SetKwProg{Procedure}{Procedure}{:}{}
\SetKw{Return}{return}
\SetKw{Continue}{continue}
\SetKw{Break}{break}
\SetKwComment{Comment}{$\triangleright$\ }{}

\title{Subquadratic-Query Algorithms for Finding\\
Another Maximum Matroid Intersection}
\newcommand{\AuthorAffiliation}{Institute of Science Tokyo}
\newcommand{\AuthorEmail}{}
\author{Makoto Watanabe\\
    \small \AuthorAffiliation%
    \ifdefempty{\AuthorEmail}{}{\\[-0.2ex]
        \small\href{mailto:\AuthorEmail}{\texttt{\AuthorEmail}}}%
}
\date{}

\begin{document}
\maketitle

\begin{abstract}
Let $\M_1,\M_2$ be two matroids on a common ground set $V$, given by
independence oracles, and let $S$ be a maximum-cardinality common independent set.
We study the problem of deciding whether there exists another maximum common
independent set $T\ne S$, and of outputting one when it exists.  Writing
$n\coloneqq|V|$ and $r\coloneqq|S|$, we give a Las Vegas algorithm using
$\widetilde O(n\sqrt r)$ expected independence queries and a deterministic
algorithm using $\widetilde O(nr^{2/3})$ queries.  As an application, all
maximum common independent sets can be
enumerated with at most two another-solution calls between consecutive outputs
and after the last output; if there are $L$ solutions, exactly $2L-1$ such
calls are made.  Neither algorithm constructs the exchange graph.
Instead, they perform Kahn-style source peeling through two deletion-only data
structures.  One side uses the heavy/light categorization of Blikstad, van den
Brand, Mukhopadhyay, and Nanongkai.  For the opposite side, where no transposed
oracle is available, we give a collective randomized classifier and a
deterministic capacity-saturation certificate.
\end{abstract}

\section{Introduction}

\paragraph{Problem and motivation.}
Matroid Intersection is a fundamental problem in combinatorial optimization.
Given two matroids $\M_1=(V,\I_1)$ and $\M_2=(V,\I_2)$ on the same ground
set, it asks for a largest set that is independent in both matroids.  We work
in the standard independence-oracle model: the only access to each matroid is
an oracle that decides whether a specified subset of $V$ is independent.
In the problem studied here, a maximum common independent set $S$ is already
given, and the task is to decide whether there is another maximum common
independent set $T\ne S$ and to output one if it exists.  We call this problem
Another Matroid Intersection.  We write $n\coloneqq |V|$,
$r\coloneqq |S|$, and $\nS\coloneqq V\setminus S$.

A standard tool for Matroid Intersection is the exchange graph associated with
a common independent set.  Its arcs represent feasible single-element
exchanges, and augmenting-path algorithms use reachability in this graph to
enlarge the current solution~\cite{Cunningham86,Schrijver}.  Although the set
$S$ in our problem is already maximum, the same graph still encodes how one
maximum solution can be exchanged for another.  More precisely, we prove in
\cref{thm:cycle-criterion} that another maximum solution exists if and only if
the exchange graph of $S$ contains a directed cycle.
This characterization gives a simple baseline: the exchange graph of $S$ has
at most $2r(n-r)=O(nr)$ arcs, and this bound is
tight in the worst case.  Each possible arc can be tested with one independence
query.  Thus one may construct the whole exchange graph and search it for a
directed cycle using $O(nr)$ queries.  This is the textbook baseline for
Another Matroid Intersection.

For many years, this baseline was not only simple but also faster than the
best-known query bounds for finding one maximum common independent set from
scratch.  Cunningham's classical algorithm is based on augmenting paths in the
exchange graph~\cite{Cunningham86}, and Chakrabarty, Lee, Sidford, Singla, and
Wong obtained an implementation using $\widetilde O(nr)$\footnote{Here and
throughout, $\widetilde O(\cdot)$ suppresses factors polylogarithmic in $n$.}
independence queries~\cite{CLS19}.
Blikstad, van den Brand, Mukhopadhyay, and Nanongkai
then broke this barrier for a range of parameters by exploring the exchange
graph implicitly~\cite{BBMN21}.  Building on their framework, Blikstad obtained
exact algorithms using $\widetilde O(nr^{3/4})$ randomized queries and
$\widetilde O(nr^{5/6})$ deterministic queries~\cite{Blikstad21}.  It would be
surprising if Another Matroid Intersection, with its
particularly simple cycle characterization, required asymptotically more
queries than finding a maximum solution itself.  This motivates seeking a
sub-$nr$ algorithm that avoids constructing the exchange graph.

\paragraph{Related work and alternative approaches.}
Another Matroid Intersection can also be solved using more general
solution-generation algorithms, including $K$-best enumeration.
After rank-$r$ truncation, Fukuda and Namiki enumerate the common bases
~\cite{FukudaNamiki95}; Kobayashi, Kurita, and Wasa give
polynomial-delay enumeration results for large maximal common independent
sets~\cite{KKW23}; and Camerini and Hamacher give algorithms for generating
the $K$ best weighted common independent sets in nonincreasing order of
weight~\cite{CameriniHamacher89}.  With unit weights, their $K=2$ case finds a
second maximum-cardinality common independent set whenever one exists.  These
more general approaches do not give the sub-$nr$ total independence-query
bound sought here.
For the decision version, a closer analogue comes from uniqueness testing for
matchings.  Gabow, Kaplan, and Tarjan test in
linear time whether a given perfect matching is unique, even in general graphs
~\cite{GabowKaplanTarjan01}; given a maximum bipartite matching, Tassa finds in
linear time all edges belonging to some maximum matching~\cite{Tassa12}.  We are
not aware of comparable oracle-efficient uniqueness results for general
matroid intersection.

Another Matroid Intersection also reduces to Weighted Matroid Intersection:
for $r\ge1$, assign weight $r+1$ to each element of $S$ and $r+2$ to each
element of $\nS$.  Then an optimum differs from $S$ if and only if another maximum common
independent set exists.  This reduction is not useful for our query goal.  The
weighted-to-unweighted framework of Huang, Kakimura, and Kamiyama incurs a
factor linear in the maximum weight $W$~\cite{HKK19}; here $W=\Theta(r)$, so
combining it with Blikstad's randomized bound gives
$\widetilde O(nr^{7/4})$ independence queries, which is already worse than
explicitly constructing the exchange graph.  Tu obtains a polylogarithmic
dependence on $W$ under the stronger rank-oracle model~\cite{Tu22}, but this
does not yield a sub-$nr$ total independence-query bound in our model.

\paragraph{Our results.}
Our main result gives randomized and deterministic query bounds without
constructing the exchange graph, as summarized in the following theorem.

\begin{theorem}\label{thm:main}
In the independence-oracle model, Another Matroid Intersection admits
\begin{enumerate}
    \item a Las Vegas algorithm using $\widetilde O(n\sqrt r)$ expected
    independence queries; and
    \item a deterministic algorithm using $\widetilde O(nr^{2/3})$
    independence queries.
\end{enumerate}
\end{theorem}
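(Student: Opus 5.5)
The plan is to reduce \cref{thm:main} to testing whether the exchange graph $D(S)$ has a directed cycle, using \cref{thm:cycle-criterion}, and to do this test without ever listing the arcs of $D(S)$. I will use the standard orientation: for $x\in S$ and $y\in\nS$, there is an arc $x\to y$ when $S-x+y\in\I_1$ and an arc $y\to x$ when $S-x+y\in\I_2$. The test itself is Kahn-style peeling. Maintain an active set $A\subseteq V$, initially $V$. Repeatedly find an active vertex with no active in-neighbour and deactivate it. If every vertex is eventually deactivated, $D(S)$ is acyclic and $S$ is the unique maximum common independent set. Otherwise every remaining active vertex has an active in-neighbour. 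Walking backwards from any such vertex, always to an active in-neighbour, must revisit a vertex within $|A|$ steps and so closes a directed cycle.

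\textbf{Turning a cycle into a solution.} From a cycle $C$ I shortcut to a minimal cycle, one with no chords that would yield a shorter cycle. I then use the standard unique-perfect-matching lemma (Krogdahl / Schrijver) in both matroids to conclude that $T=S\triangle V(C)$ is independent in both $\M_1$ and $\M_2$. Since $|T|=|S|$ and $T\ne S$, this $T$ is the required output. The shortcutting costs at most $O(|C|\log n)$ further queries, done by binary search for in-neighbours restricted to the cycle.

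\textbf{The two primitives.} Peeling needs two deletion-only data structures. Each must answer ``does active vertex $v$ still have an active in-neighbour, and if so return one,'' while vertices are being deleted.

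The easy side is queries indexed by $y\in\nS$ asking for a neighbour $x$ in the active part of $S$. For $\M_1$ this means deciding whether $S-X+y\in\I_1$ for some $X\subseteq S\cap A$. This is a monotone property of $X$: it holds exactly when $X$ meets the fundamental circuit of $y$, or always when $S+y\in\I_1$. So one query decides existence and binary search finds a witness in $O(\log r)$ queries. This handles the $y$-indexed queries in $\M_1$ and symmetrically in $\M_2$. Overall it costs $\widetilde O(n)$ queries, plus charging each found witness once before it is deleted.

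The hard side is queries indexed by $x\in S$ asking for a neighbour $y\in\nS\cap A$. No transposed oracle exists here. For these I follow the heavy/light categorization of \cite{BBMN21}:
\begin{itemize}
\item Call $x$ \emph{heavy} if it has many (at least a threshold $\tau$) active neighbours. A heavy $x$ is certified by sampling, or deterministically by a capacity argument. It stays heavy until many deletions occur, so rechecks are amortised against the deletions.
\item Call $x$ \emph{light} otherwise. The neighbours of a light $x$ are listed explicitly through the cheap direction: for each active $y$, find all its partners by repeated binary search. The cost is charged to $y$.
\end{itemize}

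\textbf{Main obstacle.} I expect the main obstacle to be the collective classifier that decides, for all $x\in S$ at once, which of them still have any active in-neighbour in $\nS$. Per-$x$ testing costs $\Theta(n)$ each and gives $nr$ in total. The first thing I would try is the following randomized scheme. Sample a random subset $R$ of the active part of $\nS$ at density about $1/\tau$. For each $y\in R$, find its partners in $S$ in $\widetilde O(1)$ queries per partner. This collectively marks the heavy $x$'s with high probability. Light $x$'s are then handled by full enumeration restricted to the small leftover set. Balancing sampling cost $n\cdot r/\tau$-type terms against enumeration $n\tau$ gives $\tau=\sqrt r$ and $\widetilde O(n\sqrt r)$. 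The algorithm is Las Vegas because every reported arc is verified by an oracle query and acyclicity is certified by complete peeling.

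For the deterministic bound I would replace sampling with a saturation certificate. Greedily assign each active $y$ to at most $\tau$ distinct partners $x$, with a capacity of $\tau$ per $x$. A saturated $x$ is certified heavy. Unsaturated ones have few neighbours and are listed. Re-running this after deletions costs an extra factor, and balancing then gives $\tau=r^{1/3}$ and $\widetilde O(nr^{2/3})$.
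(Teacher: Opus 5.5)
Your architecture matches the paper's: cycle criterion, Kahn peeling, shortcutting to a chordless cycle, row sampling for the $S$ side, and a capacity certificate for the deterministic version. However, the accounting has genuine gaps, starting with the side you call easy. Testing one $y\in\nS$ is indeed a single query. The difficulty is knowing \emph{which} $y$ have just become sources after a deletion in $S$, without retesting all of them. Storing one witness per $y$ and re-finding it when it is deleted is not $\widetilde O(n)$. Take $\M_1$ uniform of rank $r$ and every element of $\nS$ a loop of $\M_2$; then $D_S$ consists of all arcs $x\to y$ and nothing else, so all of $S$ is peeled first. If witnesses are found by binary search in a fixed order and $S$ happens to be peeled in that order, every deletion invalidates all $n-r$ witnesses, giving $(n-r)r$ re-finds. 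So the witnesses you propose to ``charge'' are not bounded below $nr$. This is exactly where the heavy/light categorization of \cite{BBMN21} is needed: randomized phases of length $\sqrt r$ with light rows listed, or deterministic minimum-load witnesses. Its $\widetilde O((n-r)\sqrt r)$ cost is one of the two dominant terms of the final bound. In other words, BBMN21 applies directly on the $\nS$ side, and it is the $S$ side that needs new ideas.

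On the $S$ side, your deterministic sketch rests on a false step, ``unsaturated ones have few neighbours.'' A column can have few witnesses precisely because its in-neighbours have already spent their capacity on other columns. The missing idea is the saturation certificate. After the greedy scan, every deficient $x$ has active in-neighbourhood $N(x)\subseteq W(x)\cup R_{\mathrm{sat}}$, where $R_{\mathrm{sat}}$ is the set of rows at capacity, and double counting gives $|R_{\mathrm{sat}}|\le rh/\tau$. Hence it suffices to list arcs between $R_{\mathrm{sat}}$ and the deficient columns. This requires decoupling the witness target $h$ (which is also the phase length) from the row capacity $\tau$; with a single parameter, $R_{\mathrm{sat}}$ may have size $r$. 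Balancing $a\tau$, $a^2/h$, $bh$, $ab/\tau$, $b^2h/\tau$ with $a=n-r$, $b=r$ then gives $\tau\approx r^{2/3}$ and $h\approx(n-r)/r^{2/3}$ when $n-r\ge r^{2/3}$, not $\tau=r^{1/3}$.

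In your randomized sketch, detection must be two-sided. A column detected once may have indegree $1$ and would then be left implicit and missed as a source. You therefore need $\Theta(\log n)$ independent repetitions with a frequency threshold. The sampled rows must also share one candidate set of columns, so that an experiment costs $\widetilde O(|R|+r)$. Your cost model is also off. Making light columns explicit costs at least one query per active row per phase. Together with sampling this totals about $n(n-r)/h$, against $rh$ for listing, which forces $h\approx\sqrt{n(n-r)/r}$. With your choice $h=\sqrt r$ the cost is about $n(n-r)/\sqrt r$, exceeding $n\sqrt r$ by a factor $(n-r)/r$.
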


As noted above, finding another maximum common independent set amounts to
finding a directed cycle in the exchange graph.  We repeatedly remove vertices
of indegree zero, following Kahn's
topological-sorting procedure~\cite{Kahn62}.  If every vertex is removed, then
the exchange graph is acyclic and the given solution is unique by
\cref{thm:cycle-criterion}.  Otherwise the remaining graph contains a cycle,
and \cref{cor:cycle-output} shows how to recover another solution from a
chordless cycle.  The algorithmic task is thus to maintain all current
sources while the graph is revealed only through independence queries.
To do so without constructing the graph, we choose different
strategies according to a vertex's current indegree.  If the indegree is small,
we can afford to list all incoming arcs and maintain them explicitly.  If it
is large, listing the arcs is expensive, but the vertex cannot become a source
until many of its in-neighbors have been deleted, so it can safely remain
implicit for a while.  The challenge is to choose between these two strategies
without first determining every current indegree exactly.

Blikstad et al.\ introduced the \emph{heavy/light categorization} technique to
make precisely this choice~\cite[Sec.~5]{BBMN21}.  Here heavy and light refer
to high and low indegree in the current exchange graph.  The technique uses
the subset-query access developed independently by Chakrabarty et al.
~\cite[Sec.~3.2 and Lemma~11]{CLS19} and Nguy{\^e}n~\cite{Nguyen19}.
On the $\nS$ side of our deletion process, the randomized and deterministic
structures of Blikstad et al.\ apply directly: heavy vertices remain implicit
for a phase, while the incoming arcs of light vertices are listed
~\cite[Secs.~5.1 and~5.2]{BBMN21}.
The $S$ side is more difficult.  The available subset
query fixes a vertex of $\nS$ and queries a subset of $S$, but not conversely,
so transposing the preceding structure would be too expensive.  Our randomized
algorithm overcomes this asymmetry by repeatedly sampling vertices of $\nS$
and identifying all vertices of $S$ that receive an arc from at least one
sampled vertex.  The resulting detection frequencies classify the indegrees of
all vertices of $S$ collectively; after that step, source peeling proceeds as
in the heavy/light framework of Blikstad et al.

For the deterministic algorithm, we instead store in-neighbors as certificates
that an implicit vertex cannot yet become a source.  A single in-neighbor
cannot be allowed to certify too many vertices, since deleting it would then
trigger too much repair work.  Imposing a capacity controls this load, but may
make a genuinely high-degree vertex appear to have too few available
certificates.  The saturated in-neighbors causing this obstruction form a
small set, so we can enumerate only the possible arcs incident to that set.
This capacity--saturation tradeoff yields the deterministic
$\widetilde O(nr^{2/3})$ bound.

Combining binary partitioning with Uno's alternative-output
scheduling~\cite{Uno03} reduces the enumeration of maximum common independent
sets to at most two another-solution calls between consecutive outputs and
after the last output.  Hence the query delay has the same asymptotic order as
one Another Matroid Intersection call, and our faster another-solution
algorithms yield expected query delay $\widetilde O(n\sqrt r)$ or deterministic
query delay $\widetilde O(nr^{2/3})$; see \cref{sec:enumeration}.

\paragraph{Organization.}
\Cref{sec:prelim} gives the necessary matroid preliminaries.
\Cref{sec:reduction} reduces Another Matroid Intersection to source peeling in
an implicit directed bipartite graph.  \Cref{sec:randomized} presents the
randomized data structures and Las Vegas algorithm, and
\cref{sec:deterministic} gives their deterministic counterparts.
\Cref{sec:enumeration} describes the enumeration application.

%----------------------------------------------------------------

\section{Preliminaries}\label{sec:prelim}

For a set $A$ and elements $e,f$, write
\[
    A+e\coloneqq A\cup \Set{e},\qquad
    A-e\coloneqq A\setminus \Set{e},\qquad
    A-e+f\coloneqq (A-e)+f.
\]

\begin{definition}[Matroid]
A \emph{matroid} is a pair $\M=(V,\I)$ consisting of a finite set $V$, called
the \emph{ground set}, and a family $\I\subseteq 2^V$ satisfying the following
axioms:
\begin{enumerate}[label=(I\arabic*)]
    \item $\emptyset\in\I$;
    \item if $B\in\I$ and $A\subseteq B$, then $A\in\I$; and
    \item if $A,B\in\I$ and $|A|<|B|$, then there exists an element
    $e\in B\setminus A$ such that $A+e\in\I$.
\end{enumerate}
The members of $\I$ are called the \emph{independent sets} of $\M$.
\end{definition}

\begin{definition}[Matroid Intersection]
Given two matroids $\M_1=(V,\I_1)$ and $\M_2=(V,\I_2)$ on the same ground
set, a set in $\I_1\cap\I_2$ is called a \emph{common independent set}.  The
\emph{Matroid Intersection} problem asks for a common independent set of
maximum cardinality.
\end{definition}

We study the problem of finding another solution when one maximum common
independent set is already given.

\begin{definition}[Another Matroid Intersection]
Given two matroids $\M_1=(V,\I_1)$ and $\M_2=(V,\I_2)$ and a maximum common
independent set $S$, the \emph{Another Matroid Intersection} problem asks
whether there exists a maximum common independent set $T\ne S$ and, if so, to
output one such set $T$.
\end{definition}

Throughout the paper, the matroids $\M_1,\M_2$ and a maximum common independent
set $S$ are fixed.  We write
\[
    n\coloneqq |V|,\qquad r\coloneqq |S|,\qquad
    \nS\coloneqq V\setminus S.
\]

\begin{definition}[Exchange graph]
The \emph{exchange graph} of $S$ is the directed bipartite graph
$D_S=(S\cup\nS,E_S)$, where
\begin{align*}
    E_S\coloneqq{}&
    \Set{(x,y)\in S\times\nS\mid S-x+y\in\I_1}\\
    &\mathbin{\cup}
    \Set{(y,x)\in\nS\times S\mid S-x+y\in\I_2}.
\end{align*}
\end{definition}

\paragraph{Independence-oracle access.}

We work in the \emph{independence-oracle model}.  Given $i\in\Set{1,2}$ and
$A\subseteq V$, an independence query returns whether $A\in\I_i$.  The two
matroids are accessed only through such queries, and we measure the number of
queries made by an algorithm.

The following observation allows one independence query to test whether a
vertex in $\nS$ has a neighbor in an arbitrary nonempty subset of $S$; see
also~\cite[Sec.~3.2]{CLS19} and~\cite{Nguyen19}.

\begin{lemma}[Neighborhood query]\label{lem:neighborhood-query}
For every nonempty set $X\subseteq S$ and every $y\in\nS$,
\begin{align*}
    (S\setminus X)+y\in\I_1
    &\iff \text{there exists $x\in X$ such that $(x,y)\in E_S$},\\
    (S\setminus X)+y\in\I_2
    &\iff \text{there exists $x\in X$ such that $(y,x)\in E_S$}.
\end{align*}
Consequently, each of the two neighborhood conditions on the right-hand side
can be tested with one independence query.
\end{lemma}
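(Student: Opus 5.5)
The plan is to prove both equivalences by one argument, carried out for a generic matroid $\M_i=(V,\I_i)$ with $i\in\Set{1,2}$. The only property of $S$ we use is $S\in\I_i$. By the definition of $E_S$, the right-hand side in either line says that $S-x+y\in\I_i$ for some $x\in X$. So for each $i$ we must show
\[
(S\setminus X)+y\in\I_i \iff \exists\, x\in X:\ S-x+y\in\I_i .
\]

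The direction ($\Leftarrow$) is immediate. If $S-x+y\in\I_i$ with $x\in X$, then $(S\setminus X)+y\subseteq S-x+y$, and (I2) gives $(S\setminus X)+y\in\I_i$.

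For ($\Rightarrow$), suppose $A\coloneqq(S\setminus X)+y\in\I_i$. Note that $|A|=|S|-|X|+1\le|S|$, since $X\ne\emptyset$.
\begin{itemize}
\item If $S+y\in\I_i$, then any $x\in X$ works, because $S-x+y\subseteq S+y$.
\item Otherwise $S+y$ is dependent and contains a unique circuit $C$. This circuit satisfies $y\in C$, because $S$ is independent.
\item We claim that $C\cap X\ne\emptyset$. Otherwise $C\subseteq(S\setminus X)+y=A$, which contradicts the independence of $A$.
\item Take $x\in C\cap X$. Then $S-x+y$ contains no circuit: any circuit inside $S+y$ must be $C$, and $C\not\subseteq S-x+y$ because $x\in C$. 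Hence $S-x+y\in\I_i$.
\end{itemize}

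To stay strictly within the axioms (I1)--(I3), we avoid the circuit language by using augmentation. Apply (I3) repeatedly to extend $A$ toward $S+y$, using the independent set $S$, which has size at least $|A|$:
\begin{itemize}
\item Grow $A$ by elements of $S\setminus A = X$ until it has size $|S|$. Call the result $B$. Then $B=S-x+y$ for exactly one $x\in X$, so $B\in\I_i$ and $x$ is the required witness.
\item To check that the augmentation can continue up to size $|S|$: while $|A'|<|S|$, (I3) applied to $A'$ and $S$ gives $e\in S\setminus A'$ with $A'+e\in\I_i$. Since $S\setminus X\subseteq A'$, every such $e$ lies in $X$.
\item The process stops exactly at size $|S|$, having added $|X|-1$ elements of $X$. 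This leaves one element $x$ of $X$ missing, and the final set is $S-x+y$.
\end{itemize}
This augmentation route is self-contained, so it is the one I will write up.

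The only delicate step is the bookkeeping in the augmentation: confirming that $y$ never needs to be removed and that exactly one element of $X$ ends up excluded. Both follow from counting sizes. The final sentence of the lemma then follows because each membership $(S\setminus X)+y\in\I_i$ is a single independence query.
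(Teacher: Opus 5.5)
Your proof is correct. The paper does not prove this lemma; it only points to Chakrabarty et al.\ (Sec.~3.2) and Nguy\^en, so there is no in-paper argument to compare against. Your augmentation argument supplies a complete proof from (I2), (I3), and $S\in\I_i$ alone.

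Both of your routes work, but they differ slightly:
\begin{enumerate}
\item The circuit argument relies on the unique-circuit property of $S+y$, which is not among (I1)--(I3). It also needs a separate case for $S+y\in\I_i$.
\item The augmentation argument never checks whether $S+y$ is independent, so that case is covered automatically. It uses $X\ne\emptyset$ exactly where it is needed: to guarantee that, after $(S\setminus X)+y$ is grown to size $|S|$, exactly one element of $X$ is still missing.
\end{enumerate}

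This uniformity is worth keeping in the write-up. The paper's definition of $E_S$ does not exclude $S+y\in\I_i$, and in that case every $x\in X$ is a neighbor of $y$.
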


We also use the following standard exchange property of matroids.

\begin{lemma}[Exchange matching~{\cite[Corollary~39.12a and
Theorem~39.13]{Schrijver}}]\label{lem:exchange-matching}
Let $X\subseteq S$ and $Y\subseteq\nS$ satisfy $|X|=|Y|$.  For
$i\in\Set{1,2}$, define a bipartite graph $G_i(X,Y)$ with bipartition $(X,Y)$
by
\begin{align*}
    E(G_1(X,Y))&\coloneqq \Set{xy\mid x\in X,\ y\in Y,\ (x,y)\in E_S},\\
    E(G_2(X,Y))&\coloneqq \Set{xy\mid x\in X,\ y\in Y,\ (y,x)\in E_S}.
\end{align*}

For each $i\in\Set{1,2}$, the following statements hold:
\begin{enumerate}[label=(\roman*)]
    \item if $(S\setminus X)\cup Y\in\I_i$, then $G_i(X,Y)$ has a perfect
    matching; and
    \item if $G_i(X,Y)$ has a unique perfect matching, then
    $(S\setminus X)\cup Y\in\I_i$.
\end{enumerate}
\end{lemma}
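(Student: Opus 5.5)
Both parts are statements about the single matroid $\M_i$ and the independent set $S$, so I would fix $i$ and write $G\coloneqq G_i(X,Y)$. The main tool is fundamental circuits. For $y\in Y$, either $S+y\in\I_i$, in which case $y$ is adjacent in $G$ to every $x\in X$; or $S+y$ contains a unique circuit $C(S,y)$, in which case $S-x+y\in\I_i$ holds exactly when $x\in C(S,y)$. So the neighbourhood of $y$ in $G$ is either all of $X$ or $C(S,y)\cap X$. Maximality of $S$ plays no role in either part.

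For (i) I would verify Hall's condition from the $Y$ side; since $|X|=|Y|$, this gives a perfect matching. Fix $Z\subseteq Y$ and let $N(Z)\subseteq X$ be its neighbourhood. If some $y\in Z$ has $S+y\in\I_i$, then $N(Z)=X$ and we are done. Otherwise, for every $y\in Z$ we have $C(S,y)-y\subseteq N(Z)\cup(S\setminus X)$, so $Z\subseteq\operatorname{span}_i\bigl(N(Z)\cup(S\setminus X)\bigr)$. The set $Z\cup(S\setminus X)$ is a subset of $(S\setminus X)\cup Y$, hence independent. Comparing ranks gives
\[
  |Z|+|S\setminus X|\le \operatorname{rank}_i\bigl(N(Z)\cup(S\setminus X)\bigr)=|N(Z)|+|S\setminus X|,
\]
so $|N(Z)|\ge|Z|$.

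For (ii) I would first use the standard fact about bipartite graphs with a unique perfect matching. Such a graph admits an ordering $x_1,\dots,x_k$, $y_1,\dots,y_k$ in which the matching is $\{x_jy_j\}$ and $x_jy_l\in E(G)$ implies $j\ge l$. This follows by repeatedly peeling a degree-one vertex, which must exist because otherwise the matching edges and non-matching edges would form an alternating cycle. I would then argue by contradiction. Suppose $(S\setminus X)\cup Y\notin\I_i$, and take a circuit $C\subseteq(S\setminus X)\cup Y$. Let $y_l$ be the element of $C\cap Y$ with the smallest index $l$; note $C\cap Y\ne\emptyset$ because $S$ is independent. Each $y\in C\cap Y$ lies in the span of $S$, since otherwise $S+y$ independent would make $y$ adjacent to all of $X$. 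By circuit elimination, I would replace each $y_m\in C\cap Y$ other than $y_l$ by its fundamental circuit $C(S,y_m)$. Its elements of $X$ are neighbours $x_j$ of $y_m$, so $j\ge m>l$. This yields that $y_l$ is spanned by $(S\setminus X)\cup\{x_j: j>l\}$. Hence $C(S,y_l)\cap X\subseteq\{x_j:j>l\}$, so $x_l\notin C(S,y_l)$ and $x_ly_l\notin E(G)$, contradicting that $x_ly_l$ is a matching edge.

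The main obstacle is making the elimination step in (ii) rigorous. Eliminating several $y_m$ simultaneously has to be done by induction, for instance via closure: show that $\operatorname{span}_i\bigl((S\setminus X)\cup\{y_m:m>l\}\bigr)\subseteq\operatorname{span}_i\bigl((S\setminus X)\cup\{x_j:j>l\}\bigr)$. That inclusion follows directly from $C(S,y_m)\cap X\subseteq\{x_j:j\ge m\}$. Then $y_l\in C\subseteq\operatorname{span}(C-y_l)$ gives the containment I need. I would also double-check the direction of the triangular ordering, since the argument needs the neighbours of $y_m$ to lie in $\{x_j:j\ge m\}$ and not $j\le m$.
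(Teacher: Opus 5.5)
Your proof is correct. The paper gives no argument of its own for this lemma; it only cites Schrijver. So your proposal is a self-contained replacement for that citation rather than a variant of an in-paper proof.

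\textbf{Part (i).} Your Hall-plus-rank argument is the usual one. Other presentations reach the same obstruction from the complementary side, via K\H{o}nig's theorem and the augmentation axiom.

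\textbf{Part (ii).} The usual textbook argument is an induction on $|X|$. Pick a degree-one $x\in X$ with matched partner $y$. Every other $y'\in Y$ then has $S-x+y'\notin\I_i$, so $y'\in\operatorname{span}_i(S-x)$, while $S-x+y\in\I_i$ gives $y\notin\operatorname{span}_i(S-x)$. Applying the induction hypothesis to $(X-x,Y-y)$ makes $(S\setminus X)\cup(Y-y)$ independent and contained in $\operatorname{span}_i(S-x)$, so adding $y$ preserves independence. You instead fix the whole triangular order in advance. You then reach a contradiction from the lowest-index $Y$-element of a circuit through a single closure inclusion. This is the same idea with the induction unrolled.

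\textbf{What each route buys.} The citation buys brevity. Your version makes explicit that only $S\in\I_i$ is used. It also handles elements $y$ with $S+y\in\I_i$, which have no fundamental circuit and which the paper's exchange graph does not exclude.

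\textbf{One sentence to tighten.} You justify that each $y\in C\cap Y$ lies in $\operatorname{span}_i(S)$ by noting that a free $y$ would be adjacent to all of $X$. That adjacency is not contradictory by itself. It contradicts your order only for an index $m\ge2$, since a free $y_m$ would be adjacent to $x_1$. This covers every $y_m$ with $m>l$, which is all the closure step needs.

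For $y_l$ itself (possibly $y_1$), you do not need this claim in advance. Membership in $\operatorname{span}_i(S)$ follows from your derived containment $y_l\in\operatorname{span}_i\bigl((S\setminus X)\cup\{x_j:j>l\}\bigr)$. That containment also shows $C(S,y_l)$ exists and avoids $x_l$. The orientation of your order is the right one.
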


%----------------------------------------------------------------

\section{Reduction to source peeling}\label{sec:reduction}

We first characterize another maximum common independent set by a directed
cycle in the exchange graph.  We then leave the matroid setting and formulate
the remainder of the problem as source peeling in an implicitly represented
directed bipartite graph.

A directed cycle is \emph{chordless} if the subgraph induced by its vertices
contains no arcs other than the arcs of the cycle.

\begin{theorem}\label{thm:cycle-criterion}
There exists a maximum common independent set $T\ne S$ if and only if $D_S$
contains a directed cycle.
\end{theorem}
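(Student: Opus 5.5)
We prove the two directions separately, relying only on \cref{lem:exchange-matching} and the fact that $|T|=|S|=r$ for every maximum common independent set $T$.

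\textbf{($\Rightarrow$).} Let $T\ne S$ be a maximum common independent set. Set $X\coloneqq S\setminus T$ and $Y\coloneqq T\setminus S=T\cap\nS$. Since $|T|=|S|$, we have $|X|=|Y|\ge1$, and $T=(S\setminus X)\cup Y$. Because $T\in\I_1\cap\I_2$, \cref{lem:exchange-matching}(i), applied once with $i=1$ and once with $i=2$, gives two perfect matchings.
\begin{itemize}
    \item $N_1$ is a perfect matching of $G_1(X,Y)$. Each of its edges $xy$ is an arc $(x,y)\in E_S$ from $X$ to $Y$.
    \item $N_2$ is a perfect matching of $G_2(X,Y)$. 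Each of its edges $xy$ is an arc $(y,x)\in E_S$ from $Y$ to $X$.
\end{itemize}
Orient $N_1$ from $X$ to $Y$ and $N_2$ from $Y$ to $X$. In the resulting subgraph of $D_S$ on $X\cup Y$, every vertex has outdegree exactly one, and this subgraph is finite and nonempty. Following out-arcs from any vertex must therefore eventually repeat a vertex, which yields a directed cycle in $D_S$.

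\textbf{($\Leftarrow$).} Suppose $D_S$ contains a directed cycle, and pick a shortest one, $C$. A shortest directed cycle is chordless: a chord would split $C$ into a strictly shorter directed cycle. Since $D_S$ is bipartite, $C$ alternates between $S$ and $\nS$. Let $X\coloneqq V(C)\cap S$ and $Y\coloneqq V(C)\cap\nS$, so $|X|=|Y|\ge1$ (a cycle has length at least $2$). Define $T\coloneqq(S\setminus X)\cup Y$. Then $T\ne S$ and $|T|=r$, so it remains to show that $T\in\I_1\cap\I_2$.

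The arcs of $C$ directed from $X$ to $Y$ form a perfect matching $N_1$ of $G_1(X,Y)$. Chordlessness says that $G_1(X,Y)$ has no other edges, so $G_1(X,Y)=N_1$ and its perfect matching is unique. By \cref{lem:exchange-matching}(ii), $T\in\I_1$. The same argument applied to the arcs of $C$ directed from $Y$ to $X$ shows that $G_2(X,Y)$ is exactly that perfect matching, so $T\in\I_2$. Hence $T$ is a common independent set of size $r$, and therefore a maximum common independent set different from $S$.

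\textbf{Main obstacle.} The delicate point is the uniqueness condition in the ($\Leftarrow$) direction. An arbitrary cycle does not suffice, because $G_1(X,Y)$ or $G_2(X,Y)$ may then contain extra edges and admit several perfect matchings, in which case \cref{lem:exchange-matching}(ii) does not apply. Passing to a shortest, hence chordless, cycle removes every non-cycle arc inside $X\cup Y$ and forces both matchings to be unique.
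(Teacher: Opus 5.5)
Your proposal is correct and follows essentially the same route as the paper: for ($\Rightarrow$) you apply \cref{lem:exchange-matching}(i) twice and take a directed cycle in the union of the two matchings. For ($\Leftarrow$) you take a shortest, hence chordless, cycle, so that $G_1(X,Y)$ and $G_2(X,Y)$ are exactly the cycle's two perfect matchings and \cref{lem:exchange-matching}(ii) applies; you also correctly spell out two steps the paper leaves implicit (why a shortest cycle is chordless, and why an outdegree-one subgraph contains a directed cycle).
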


\begin{proof}
Suppose first that $T\ne S$ is a maximum common independent set.  Let
\[
    R\coloneqq S\setminus T
    \qquad\text{and}\qquad
    Y\coloneqq T\setminus S.
\]
Since $|T|=|S|$, we have $|R|=|Y|$.  By
\cref{lem:exchange-matching}(i), $G_1(R,Y)$ has a perfect matching formed by
arcs from $R$ to $Y$, and $G_2(R,Y)$ has a perfect matching formed by arcs
from $Y$ to $R$.  Their union contains a directed cycle in $D_S$.

Conversely, suppose that $D_S$ contains a directed cycle.  By taking one with
the fewest vertices, we obtain a chordless directed cycle $C$.  Let
\[
    R\coloneqq V(C)\cap S,
    \qquad
    Y\coloneqq V(C)\cap\nS,
    \qquad
    T\coloneqq (S\setminus R)\cup Y.
\]
The arcs of $C$ from $R$ to $Y$ form the unique perfect matching of
$G_1(R,Y)$, and the arcs from $Y$ to $R$ form the unique perfect matching of
$G_2(R,Y)$.  Hence \cref{lem:exchange-matching}(ii) implies
$T\in\I_1\cap\I_2$.  Moreover, $|T|=|S|$ and $T\ne S$, so $T$ is another
maximum common independent set.
\end{proof}

The converse argument gives the following useful output rule.

\begin{corollary}\label{cor:cycle-output}
For every chordless directed cycle $C$ in $D_S$, the set
$S\mathbin{\triangle}V(C)$ is a maximum common independent set different from
$S$.
\end{corollary}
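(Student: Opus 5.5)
The plan is to repeat the converse half of the proof of \cref{thm:cycle-criterion}, now for an arbitrary chordless directed cycle $C$ rather than a shortest one. The only use of minimality in that argument was to guarantee chordlessness, so the same reasoning should go through verbatim.

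Set $R\coloneqq V(C)\cap S$ and $Y\coloneqq V(C)\cap\nS$. Since $D_S$ is bipartite between $S$ and $\nS$, the vertices of $C$ alternate between the two sides, so $|R|=|Y|\ge 1$ and $S\mathbin{\triangle}V(C)=(S\setminus R)\cup Y$. The arcs of $C$ split into two groups: those directed from $R$ to $Y$, and those directed from $Y$ to $R$. Each group is a perfect matching between $R$ and $Y$, because every vertex of $C$ has exactly one outgoing and one incoming arc on $C$.

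The key step is uniqueness of these matchings. Chordlessness means that every arc of $D_S$ with both endpoints in $V(C)$ is an arc of $C$. Hence the edge set of $G_1(R,Y)$ is exactly the set of $R\to Y$ arcs of $C$, and the edge set of $G_2(R,Y)$ is exactly the set of $Y\to R$ arcs of $C$. Each graph is therefore itself a perfect matching, which is trivially its unique perfect matching. Applying \cref{lem:exchange-matching}(ii) with $i=1$ and with $i=2$ gives $(S\setminus R)\cup Y\in\I_1\cap\I_2$.

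Finally, $|(S\setminus R)\cup Y|=|S|-|R|+|Y|=|S|$, so this set is a maximum common independent set. It differs from $S$ because $Y\ne\emptyset$ and $Y\cap S=\emptyset$. The only point needing care is confirming that chordlessness forces $G_i(R,Y)$ to equal the matching, not merely to contain it; I do not expect a genuine obstacle here.
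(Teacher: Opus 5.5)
Your proposal is correct and is exactly the paper's argument: the corollary comes from rerunning the converse half of the proof of \cref{thm:cycle-criterion} with $R=V(C)\cap S$ and $Y=V(C)\cap\nS$. As you note, that argument uses only chordlessness, not minimality, and it applies \cref{lem:exchange-matching}(ii) to both $G_1(R,Y)$ and $G_2(R,Y)$. Your observation that chordlessness makes each $G_i(R,Y)$ equal to the corresponding half of $C$, so that it is trivially its own unique perfect matching, is precisely the step the paper leaves implicit.
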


It therefore suffices to find a chordless directed cycle or certify that no
directed cycle exists.  We next express this task without any further
reference to matroids.

\subsection{The directed bipartite query model}\label{sec:bipartite-query-model}

Let
\[
    G=(A\sqcup B,E),
    \qquad
    E\subseteq (A\times B)\cup(B\times A),
\]
be a directed bipartite graph with $|A|=a$ and $|B|=b$.  The sets $A$ and
$B$ are given explicitly, whereas the arc set $E$ is implicit.  We call the
vertices in $A$ \emph{rows} and those in $B$ \emph{columns}.\footnote{These
terms refer only to the two index sets of the two $0$--$1$ matrices
representing the two arc directions; no order or algebraic structure is
assumed.}

The graph is accessed through the following two Boolean oracles.  For a row
$p\in A$ and a nonempty set of columns $Q\subseteq B$, define
\begin{align*}
    \mathsf{HasIn}(p,Q)
    &\coloneqq \mathbf{1}\bigl[\text{there exists $q\in Q$ with $(q,p)\in E$}\bigr],\\
    \mathsf{HasOut}(p,Q)
    &\coloneqq \mathbf{1}\bigl[\text{there exists $q\in Q$ with $(p,q)\in E$}\bigr].
\end{align*}
Thus both oracles return a value in $\Set{0,1}$.  We count calls to these
oracles and refer to either type of call as a \emph{graph query}.  The problem
in this model is to find a chordless directed cycle in $G$ or report that $G$
is acyclic.

\begin{remark}\label{rem:exchange-query-model}
The exchange graph $D_S$ is an instance of this model with
\[
    A=\nS,\qquad B=S,\qquad a=n-r,\qquad b=r.
\]
For $p\in\nS$ and nonempty $Q\subseteq S$, \cref{lem:neighborhood-query}
implements $\mathsf{HasIn}(p,Q)$ by querying whether
$(S\setminus Q)+p\in\I_1$, and implements $\mathsf{HasOut}(p,Q)$ by querying
whether $(S\setminus Q)+p\in\I_2$.  Hence one graph query requires exactly one
independence query.  In fact, the graph model uses only the restricted class
of independence queries on sets $T$ satisfying $|T\setminus S|=1$.
\end{remark}

\paragraph{Basic primitives.}

For a row $p\in A$ and a set of columns $Q\subseteq B$,
$\mathsf{FindIn}(p,Q)$ returns a column $q\in Q$ with $(q,p)\in E$, or
$\bot$ if none exists.  The procedure $\mathsf{FindOut}(p,Q)$ is defined
analogously for an arc $(p,q)\in E$.  Both are implemented by binary search
using the corresponding Boolean oracle.

The procedures $\mathsf{ListIn}(p,Q)$ and $\mathsf{ListOut}(p,Q)$ return,
respectively,
\[
    \Set{q\in Q\mid(q,p)\in E}
    \qquad\text{and}\qquad
    \Set{q\in Q\mid(p,q)\in E}.
\]
They repeatedly find and remove one column from the candidate set.  Finally,
for $R\subseteq A$ and $Q\subseteq B$, the procedure
$\mathsf{CollectOut}(R,Q)$ returns
\[
    \Set{q\in Q\mid\text{there exists $p\in R$ with $(p,q)\in E$}
    }
\]
together with, for every returned column $q$, a witness $w(q)\in R$ such that
$(w(q),q)\in E$.  It processes the rows in $R$ while sharing one set of
columns not yet found.  Notice that we do not need a corresponding
$\mathsf{CollectIn}$ procedure.

\begin{lemma}\label{lem:primitive-costs}
The procedures $\mathsf{FindIn}$ and $\mathsf{FindOut}$ use $O(\log b)$ graph
queries.  If $k$ columns are returned, $\mathsf{ListIn}$ and
$\mathsf{ListOut}$ use $O(1+k\log b)$ graph queries.  If $k$ columns are
returned, $\mathsf{CollectOut}(R,Q)$ uses
$O(|R|+k\log b)$ graph queries.
\end{lemma}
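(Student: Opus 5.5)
The plan is to analyze each primitive separately, treating $\mathsf{FindIn}$ as the basic building block; $\mathsf{FindOut}$ and the out-versions are symmetric.

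For $\mathsf{FindIn}(p,Q)$, first ask $\mathsf{HasIn}(p,Q)$; a negative answer (or $Q=\emptyset$) returns $\bot$ after one query. Otherwise keep the invariant that the current candidate set $Q'$ satisfies $\mathsf{HasIn}(p,Q')=1$. Split $Q'$ into two halves $Q_1,Q_2$ of sizes $\lceil|Q'|/2\rceil$ and $\lfloor|Q'|/2\rfloor$, and query $\mathsf{HasIn}(p,Q_1)$. If the answer is positive, recurse on $Q_1$; otherwise, since $Q'$ contains an in-neighbor of $p$ and $Q_1$ does not, recurse on $Q_2$ without a further query. Both halves are nonempty whenever $|Q'|\ge2$, so every query is on a nonempty set as the oracles require. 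The size halves each round, so after $O(\log|Q|)\le O(\log b)$ queries we reach a singleton, which is a valid in-neighbor by the invariant.

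For $\mathsf{ListIn}(p,Q)$, maintain $Q'\leftarrow Q$ and repeatedly call $\mathsf{FindIn}(p,Q')$, recording and removing each returned column, until $\bot$ is returned. Each of the $k$ successful calls costs $O(\log b)$. The final unsuccessful call costs one query (the initial $\mathsf{HasIn}$ test), or zero if $Q'=\emptyset$. This gives $O(1+k\log b)$, and correctness is immediate since removal never discards an unreported in-neighbor.

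For $\mathsf{CollectOut}(R,Q)$, maintain the shared set $U\leftarrow Q$ of columns not yet found. For each $p\in R$ in turn, repeatedly call $\mathsf{FindOut}(p,U)$; on success at $q$, set $w(q)\leftarrow p$, output $q$, and delete it from $U$; on $\bot$, move to the next row. Correctness holds because when row $p$ finishes, no column of $U$ is an out-neighbor of $p$. Hence every column of $Q$ with some in-neighbor in $R$ is found while processing the first such row, unless it was found earlier. The cost is the main accounting point. Each row ends with exactly one failing call costing at most one query, for $|R|$ in total. Each successful call costs $O(\log b)$ and removes a distinct column, so there are exactly $k$ of them. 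The total is $O(|R|+k\log b)$; the only subtlety is charging failures to rows and successes to returned columns, which is where the shared $U$ matters.
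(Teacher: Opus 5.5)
Your proposal is correct and follows essentially the same argument as the paper. It uses a binary search with one query per halving for $\mathsf{FindIn}$ and $\mathsf{FindOut}$, and a single terminal negative query plus $O(\log b)$ per discovered column for the listing procedures. For $\mathsf{CollectOut}$, it charges one terminal negative query to each row in $R$ and each successful search to the distinct column it removes from the shared candidate set.
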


\begin{proof}
A binary search halves the candidate set after each query, which proves the
first bound.  Each successful iteration of a listing procedure discovers and
removes one column, and there is at most one final negative query.  In
$\mathsf{CollectOut}$, each processed row contributes at most one final
negative query, while every successful search removes one column from the
shared candidate set.  Thus no returned column is charged more than once.
\end{proof}

Here and below, $O(\log b)$ abbreviates $O(\log\max\{b,2\})$.  The
procedures $\mathsf{FindIn}$ and $\mathsf{ListIn}$ are obtained from
$\mathsf{FindOut}$ and $\mathsf{ListOut}$, respectively, by replacing
$\mathsf{HasOut}$ with $\mathsf{HasIn}$.  We therefore give pseudocode only
for the outward procedures in \cref{alg:basic-primitives}.

\begin{algorithm}[H]
\caption{Out-neighbor primitives in the directed bipartite query model}
\label{alg:basic-primitives}

\Function{\FindOut{$p,Q$}}{
    \If{$Q=\emptyset$}{\Return $\bot$\;}
    \If{\HasOut{$p,Q$} $=0$}{\Return $\bot$\;}
    \While{$|Q|>1$}{
        partition $Q$ into nonempty sets $Q_1,Q_2$ with
        $\bigl||Q_1|-|Q_2|\bigr|\le 1$\;
        \eIf{\HasOut{$p,Q_1$} $=1$}{
            $Q\gets Q_1$\;
        }{
            $Q\gets Q_2$\;
        }
    }
    \Return the unique column in $Q$\;
}

\Function{\ListOut{$p,Q$}}{
    $L\gets\emptyset$\;
    $q\gets\FindOut{$p,Q$}$\;
    \While{$q\ne\bot$}{
        $L\gets L+q$\;
        $Q\gets Q-q$\;
        $q\gets\FindOut{$p,Q$}$\;
    }
    \Return $L$\;
}

\Function{\CollectOut{$R,Q$}}{
    $L\gets\emptyset$\;
    \ForEach{$p\in R$}{
        $q\gets\FindOut{$p,Q$}$\;
        \While{$q\ne\bot$}{
            $L\gets L+q$ and $w(q)\gets p$\;
            $Q\gets Q-q$\;
            $q\gets\FindOut{$p,Q$}$\;
        }
    }
    \Return $(L,w)$\;
}
\end{algorithm}

\subsection{Source peeling and cycle recovery}\label{sec:source-peeling}

We use the following standard characterization of directed acyclic graphs,
which underlies the topological-sorting algorithm of Kahn~\cite{Kahn62}.

\begin{lemma}[Source peeling]\label{lem:source-peeling}
Let $H$ be a finite directed graph.  Repeatedly delete a source and all arcs
incident to it.  This process deletes every vertex if and only if $H$ is
acyclic.  If it stops with a nonempty graph, the remaining graph has no source
and contains a directed cycle.
\end{lemma}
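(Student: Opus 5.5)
The plan is to prove the two directions separately, arguing on the number of vertices, and then deduce the final sentence from the stopping condition of the process.

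For the direction ``acyclic $\Rightarrow$ everything is deleted'', the key fact is that every nonempty finite acyclic digraph has a source. To see this, assume every vertex has an in-neighbor. Starting from any vertex, repeatedly step to an in-neighbor. By finiteness some vertex repeats, and the closed walk between its two occurrences contains a directed cycle, which is a contradiction. Deleting a vertex and its incident arcs gives an induced subgraph, and induced subgraphs of acyclic graphs are acyclic. So at every stage the remaining graph is acyclic, and while it is nonempty it has a source. The process therefore never gets stuck before the graph is empty, and it terminates because each step removes one vertex.

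For the converse, suppose $H$ contains a directed cycle $C$. I claim by induction on the number of deletions that no vertex of $C$ is ever deleted. Every vertex of $C$ has its cycle-predecessor as an in-neighbor. As long as all vertices of $C$ remain, every vertex of $C$ keeps positive indegree, so none of them is a source and none can be deleted next. Hence the process cannot delete every vertex. Together with the previous paragraph, this proves that every vertex is deleted if and only if $H$ is acyclic.

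For the final statement, the process stops exactly when the remaining graph has no source. So if it stops with a nonempty graph $H'$, then $H'$ has no source by definition of stopping. Applying the walk argument of the first step to $H'$ then produces a directed cycle in $H'$.

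No step is a genuine obstacle. The only point needing care is the claim that following in-neighbors must close a cycle, and this is exactly the pigeonhole argument above.
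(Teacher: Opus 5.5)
Your proof is correct and follows essentially the same route as the paper: a directed cycle blocks its own vertices from ever becoming sources, and walking backward along incoming arcs in a nonempty source-free finite graph must revisit a vertex, which yields both the existence of a source in every nonempty acyclic graph and the cycle in the stopped graph. You additionally spell out that deletions preserve acyclicity and that stopping means no source remains, both of which the paper leaves implicit.
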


\begin{proof}
Deleting a source cannot destroy a directed cycle.  Hence, if $H$ contains a
cycle, all of its vertices cannot be deleted.  Conversely, every nonempty
finite directed graph with no source contains a directed cycle: starting from
any vertex and repeatedly following an incoming arc eventually revisits a
vertex.  Therefore every nonempty acyclic graph has a source, and the peeling
process deletes all of its vertices.
\end{proof}

\begin{remark}\label{rem:recover-cycle}
Suppose that source peeling stops with active sets $A'\subseteq A$ and
$B'\subseteq B$.  For every row $p\in A'$, $\mathsf{FindIn}(p,B')$ supplies
an in-neighbor.  Moreover, $\mathsf{CollectOut}(A',B')$ returns every column
in $B'$ and supplies one in-neighbor row for each of them.  Following these
chosen incoming arcs eventually finds a directed cycle.  This takes
$O((|A'|+|B'|)\log b)$ graph queries.
\end{remark}

The resulting cycle need not be chordless, but it can be shortened without
constructing the whole graph.

\begin{lemma}\label{lem:cycle-shortening}
Given a directed cycle $C$ in $G$, one can find a chordless directed cycle
whose vertex set is contained in $V(C)$ using
$O(|V(C)|\log b)$ graph queries.
\end{lemma}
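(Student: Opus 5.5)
Let $C=(v_0,v_1,\dots,v_{m-1},v_0)$ with $m=|V(C)|$, alternating between rows and columns. The plan is to repeatedly find a chord and use it to shortcut $C$ to a strictly shorter cycle on a subset of its vertices, charging every query either to a vertex that is permanently discarded or to a single final certification of chordlessness. The obstacle is that naively each shortcut costs $O(m\log b)$ and there may be $\Omega(m)$ of them, so the work must be organised so that each vertex is scanned a bounded number of times in total.

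The organisation follows the standard linear-time ``walk along the cycle'' scheme. A chord is an arc between a row $p$ and a column $q$ of the current cycle other than the two cycle arcs at $p$. All arcs incident to a row $p$ are visible to it through $\mathsf{HasIn}(p,\cdot)$, $\mathsf{HasOut}(p,\cdot)$, $\mathsf{FindIn}$ and $\mathsf{FindOut}$. Since $G$ is bipartite, every chord has exactly one endpoint in $A$. Hence the current cycle is chordless exactly when, for every row $p$ on it, $\mathsf{HasOut}(p,Q_p^{\mathrm{out}})=0$ and $\mathsf{HasIn}(p,Q_p^{\mathrm{in}})=0$. Here $Q_p^{\mathrm{out}}$ is the set of cycle columns other than $p$'s successor, and $Q_p^{\mathrm{in}}$ is the set of cycle columns other than its predecessor. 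This final certification costs $O(m)$ queries.

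For shortening, I process the rows one at a time. When a row $p$ has an out-chord to a column $q$, the arc $(p,q)$ together with the cycle path from $q$ back to $p$ forms a shorter directed cycle. So the segment strictly between $p$ and $q$ is discarded. Symmetrically, an in-chord $(q,p)$ lets us discard the segment strictly between $q$ and $p$. To get the $O(m\log b)$ total, for row $p$ I choose, via a binary search over cycle positions, the out-chord target $q$ farthest along the cycle from $p$. The binary search uses $\mathsf{HasOut}(p,\cdot)$ on sets of columns forming a suffix of the order starting at $p$. After jumping to that $q$, no remaining column is an out-neighbour of $p$ except its new successor. I do the same with in-chords, taking the farthest predecessor.

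Each such operation costs $O(\log m)\subseteq O(\log b)$ queries, because a cycle has at most $2b$ vertices. It either deletes at least one vertex or confirms that $p$ has no chord, leaving $p$ ``clean''. Deleting vertices elsewhere never creates a chord at a clean row, since chords are arcs within the vertex set and that set only shrinks. However, a cycle arc at $p$ may change, because $p$'s successor or predecessor can be removed through another row's shortcut. In that case $p$ is recleaned, which costs $O(\log b)$ queries and is charged to the deleted vertex. Every vertex is deleted at most once, so the total is $O(m\log b)$ queries. The main point to verify carefully is exactly this: that a clean row stays clean except when its own cycle neighbour changes, and that each such change can be charged to a distinct deleted vertex.
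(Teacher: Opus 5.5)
Your proof is correct and is essentially the paper's argument. You binary-search each row's two candidate sets (cycle columns other than its successor, resp.\ predecessor), shortcut along a chord when one is found, and rely on the fact that these candidate sets only shrink, so a chord-free row stays chord-free. The one difference is that you take the farthest chord on each side, so each row is cleaned by one search per side, whereas the paper takes an arbitrary chord and charges each successful search to a deleted vertex.

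The recleaning case you guard against never actually occurs. Shortcutting along a chord $(u,v)$ keeps the contiguous cycle segment from $v$ to $u$, so only $u$ and $v$ change cycle neighbours. A clean row has no chords and is therefore never such an endpoint.
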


\begin{proof}
For every row $p$ on the current cycle, use $\mathsf{FindIn}$ to search the
cycle columns other than the predecessor of $p$, and use $\mathsf{FindOut}$
to search the cycle columns other than the successor of $p$.  A returned arc
is a chord.  If the chord is $(u,v)$, replace the current cycle by this chord
and the directed segment of the cycle from $v$ to $u$.  The new cycle is
strictly shorter.

Each successful search shortens the cycle and hence occurs at most
$|V(C)|$ times.  A negative search need not be repeated: after a shortening,
the candidate set of every surviving row only shrinks.  There are two such
candidate sets per row.  Thus there are $O(|V(C)|)$ searches in total, each
using $O(\log b)$ queries by \cref{lem:primitive-costs}.  When all searches
are negative, the current cycle is chordless.
\end{proof}

By \cref{lem:source-peeling,rem:recover-cycle,lem:cycle-shortening}, it is
enough to perform source peeling for as long as possible.  If every vertex is
deleted, the graph is acyclic.  Otherwise, the remaining source-free graph
yields a chordless cycle and, in the exchange-graph instance, a different
maximum common independent set by \cref{cor:cycle-output}.

\paragraph{Dynamic terminology.}

The two data structures developed in the next sections share active sets
$\Aact\subseteq A$ and $\Bact\subseteq B$, initially equal to $A$ and $B$,
respectively.  The underlying sets $A,B$ and the graph $G$ never change;
only $\Aact$ and $\Bact$ shrink.  All indegrees are taken in the subgraph
induced by $\Aact\cup\Bact$.

The terms used by the two data structures are intentionally asymmetric.  The
\emph{zero-row structure} finds and deletes rows of indegree zero.  A
\emph{column deletion} may decrease row indegrees, whereas a \emph{row
deactivation} removes the monitored row itself.  The \emph{zero-column
structure} finds and deletes columns of indegree zero.  A \emph{row deletion}
may decrease column indegrees, whereas a \emph{column deactivation} removes
the monitored column itself.  Graph-theoretically, all four operations remove
one vertex from its active set.

When the structures are run together, deleting a source row is a row
deactivation for the zero-row structure and a row deletion for the
zero-column structure.  Deleting a source column is a column deletion for the
zero-row structure and a column deactivation for the zero-column structure.
Consequently, the two structures implement source peeling on the same active
graph.

%----------------------------------------------------------------

\section{Randomized algorithm}\label{sec:randomized}

Maintaining the indegree of every active vertex after every deletion would be
too expensive.  Instead, we divide the deletion sequence into phases.  At the
beginning of a phase, every vertex not already maintained explicitly is
classified.  A vertex found to have small indegree is made \emph{explicit}:
all of its active incoming arcs are listed and stored.  A vertex classified as
having large indegree remains \emph{implicit}.  The classification guarantees
that such a vertex cannot become a source before the next phase.

We develop this strategy separately for rows and columns.  The two
classifiers are asymmetric because the graph model provides queries indexed
by a single row and a subset of columns.  We assume $a,b\ge1$ throughout this
section, since a directed bipartite graph with an empty side is acyclic.

\subsection{Maintaining zero rows}\label{sec:zero-row}

The zero-row structure is the randomized heavy/light categorization of
Blikstad--van den Brand--Mukhopadhyay--Nanongkai~\cite[Sec.~5.1]{BBMN21},
specialized to our graph-query model.  It receives column deletions and row
deactivations and must emit every active row when its indegree becomes zero.
For $p\in\Aact$, let
\[
    d(p)\coloneqq
    \bigl|\Set{q\in\Bact\mid(q,p)\in E}\bigr|.
\]
This quantity changes as columns are deleted.

Fix a phase length $h\ge 1$.  A phase begins at initialization and after
every $h$ column deletions.  Row deactivations do not count toward the phase
length.  Since at most $b$ columns are deleted, the number of phases is at
most
\[
    \Phi\coloneqq 1+\left\lceil\frac{b}{h}\right\rceil.
\]
Fix also a permissible failure probability $\delta\in(0,1)$ and set
\[
    L\coloneqq
    \left\lceil 10\log\frac{2a\Phi}{\delta}\right\rceil.
\]

Every active row is in one of two states.  For an \emph{explicit} row, all
active incoming arcs are stored.  Its stored neighborhood is updated after
each column deletion, so the row is emitted exactly when that neighborhood
becomes empty.  All other active rows are \emph{implicit}.  At the beginning
of a phase, every implicit row is classified as follows.  In one experiment,
include each active column independently with probability $1/(4h)$, obtaining
a random set $Q\subseteq\Bact$, and evaluate $\mathsf{HasIn}(p,Q)$ for every
implicit row $p$.  We interpret $\mathsf{HasIn}(p,\emptyset)$ as zero without
making a query.  The $L$ experiments use mutually independent samples.  A row
is labeled \emph{heavy} if at least $3L/4$ answers are positive and
\emph{light} otherwise.  A heavy row remains implicit until the next phase.
A light row is made explicit by calling $\mathsf{ListIn}(p,\Bact)$.

The words heavy and light are classifier labels, rather than definitions in
terms of the exact degree.  The following lemma relates the labels to the
degree at the beginning of the phase.

\begin{samepage}
\begin{lemma}\label{lem:row-classification}
For an implicit row $p$ with degree $d(p)$ at the beginning of a phase,
\begin{align*}
    \Pr[p\text{ is labeled heavy}\mid d(p)<2h]&\le e^{-L/8},\\
    \Pr[p\text{ is labeled light}\mid d(p)\ge 16h]&\le e^{-L/10}.
\end{align*}
\end{lemma}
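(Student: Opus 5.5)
The plan is to analyze one experiment and then apply a Chernoff bound over the $L$ independent experiments. In a single experiment each active column enters $Q$ independently with probability $1/(4h)$. The answer $\mathsf{HasIn}(p,Q)$ is positive exactly when $Q$ contains at least one of the $d=d(p)$ active in-neighbors of $p$ (the empty-$Q$ convention is consistent with this). Hence the success probability is
\[
    \pi(d)=1-\Bigl(1-\tfrac{1}{4h}\Bigr)^{d}.
\]
The $L$ answers are then i.i.d.\ Bernoulli$(\pi(d))$ variables. Let $X$ be their sum, so that $p$ is labeled heavy iff $X\ge 3L/4$.

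\emph{Light case, $d<2h$.} The union bound gives $\pi(d)\le d/(4h)<1/2$. Therefore $\Pr[X\ge 3L/4]\le\Pr[X-\mathbb E X\ge L/4]$, and Hoeffding's inequality bounds this by $\exp(-2L(1/4)^2)=e^{-L/8}$. Since $\pi$ is monotone in $d$, it is enough to use $\pi<1/2$ directly.

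\emph{Heavy case, $d\ge16h$.} Here $(1-1/(4h))^{d}\le e^{-d/(4h)}\le e^{-4}$, so $\pi(d)\ge 1-e^{-4}>0.98$. Being labeled light means $X<3L/4$, which is a deviation of more than $(\pi-3/4)L\ge 0.23L$ below the mean. Hoeffding gives $\exp(-2(0.23)^2L)\approx e^{-0.106L}\le e^{-L/10}$.

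The computations are routine. The only points needing care are verifying the numeric constant in the second case, $2\cdot 0.2317^2\approx0.107\ge 0.1$, and noting that conditioning on $d(p)$ at the phase start is legitimate. The latter holds because the samples are drawn fresh and independently of the past after the active sets are fixed at the start of the phase.
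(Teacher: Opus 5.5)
Your proposal is correct and follows the paper's proof essentially step for step. Both use the per-experiment success probability $1-(1-\frac{1}{4h})^{d}$, the union bound giving mean below $\frac12$ when $d<2h$, and the estimate $(1-\frac{1}{4h})^{d}\le e^{-4}$ when $d\ge16h$, followed by Hoeffding with deviations $\frac14$ and $\frac14-e^{-4}$ respectively, where $2(\frac14-e^{-4})^2\approx0.107>\frac1{10}$.
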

\end{samepage}

\begin{proof}
The concentration calculation is given in
\cref{app:row-classification-calculation}.
\end{proof}

\begin{theorem}\label{thm:zero-row}
For every adaptive sequence of column deletions and row deactivations, the
zero-row structure emits exactly all active rows of indegree zero with
probability at least $1-\delta$.  It uses
\[
    O\left(
        a\left(1+\frac{b}{h}\right)
        \log\frac{a(1+b/h)}{\delta}
        +ah\log b
    \right)
\]
graph queries in total.
\end{theorem}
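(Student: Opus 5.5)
The plan splits the proof into a correctness argument on a good event and a query count. Define the good event $\mathcal{G}$: in every phase, every implicit row classified at the start of the phase is labeled correctly in the sense of \cref{lem:row-classification}. That is, no row with $d(p)<2h$ is labeled heavy, and no row with $d(p)\ge16h$ is labeled light. Rows with $2h\le d(p)<16h$ may receive either label. The total number of classifications is at most $a\Phi$, and each fails with probability at most $e^{-L/10}\le \delta/(2a\Phi)$ by the choice of $L$. A union bound then gives $\Pr[\mathcal{G}]\ge 1-\delta/2\ge1-\delta$.

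There is one subtlety with adaptivity. The deletion sequence may depend on earlier outputs, but within a phase the classification samples are drawn fresh at the phase start, conditioned on the current state. So \cref{lem:row-classification} applies conditionally to each row, and the union bound goes through by summing conditional failure probabilities.

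\textbf{Correctness on $\mathcal{G}$.} Explicit rows store all active in-arcs from $\Bact$ at the moment they are listed. Column deletions only remove arcs, and no arcs are ever added because $\Bact$ only shrinks. Hence the stored set equals the current in-neighborhood, and the row is emitted exactly when its indegree hits zero. (A row with $d(p)=0$ at listing time is emitted immediately.) A heavy row has $d(p)\ge 2h$ at the phase start, since on $\mathcal{G}$ light rows are exactly those not heavy with $d(p)<2h$. During the phase at most $h$ columns are deleted, so its indegree stays at least $h\ge1$ and it never needs to be emitted before the next reclassification. Row deactivations simply drop the row from either state. Thus every active row is emitted exactly when its indegree becomes zero.

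\textbf{Query count.} Classification costs at most $L$ queries per implicit row per phase. Over $\Phi=O(1+b/h)$ phases this gives $O(a(1+b/h)L)$ queries, and $L=O(\log(a(1+b/h)/\delta))$, which is the first term. Making a row explicit costs $O(1+k\log b)$ by \cref{lem:primitive-costs}, where $k=d(p)$; each row is made explicit at most once. On $\mathcal{G}$ a light row has $d(p)<16h$, so the total listing cost is $O(a(1+h\log b))=O(ah\log b)$, absorbing $O(a)$ into the first term. Updating stored neighborhoods after deletions requires no queries.

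The main obstacle is making the bound hold unconditionally, since off $\mathcal{G}$ a light row could have huge degree. I would handle this in one of two ways. One option is to state the bound on $\mathcal{G}$ as well. The other is to cap $\mathsf{ListIn}$ at $16h$ discoveries and, if the cap is exceeded, keep the row implicit: the row then certainly has $d(p)\ge 16h>h$, so correctness is preserved. With the cap, each listing costs $O(h\log b)$ deterministically and the query bound holds always. Only the listing step has a random cost, and the capped variant makes the worst-case total match the stated expression, with correctness failing only off $\mathcal{G}$.
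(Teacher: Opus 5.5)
Your main argument is correct and is essentially the paper's proof. You take a union bound over at most $a\Phi$ row--phase classifications, each failing with conditional probability at most $e^{-L/10}\le\delta/(2a\Phi)$ given the history before the phase. Heavy rows start with degree at least $2h$ and survive the at most $h$ deletions of a phase. Light rows are listed once with fewer than $16h$ in-neighbors, which gives the count $O(a\Phi L)+O(a+ah\log b)$. You are also right that the listing cost is controlled only on $\mathcal{G}$. The paper's proof uses that good event implicitly when it bounds all $\mathsf{ListIn}$ calls by $O(a+ah\log b)$, so your first option is exactly what the paper establishes.

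Your capped variant, however, does not give the unconditional bound you claim. When a capped listing aborts, you keep the row implicit. It is then reclassified in every later phase and can again be labeled light and re-listed, so "each row is listed at most once" no longer holds. Take a row whose in-neighbors are all columns while the columns are deleted one by one, and suppose every sample comes out empty. That row can undergo up to $\Phi$ capped listings, costing $O(\Phi h\log b)=O((b+h)\log b)$ queries. Over all rows this totals $O(a(b+h)\log b)$, which exceeds the stated bound, for instance by a factor of about $\sqrt b$ when $h=\lceil\sqrt b\rceil$. To get a worst-case guarantee, abort the whole run and declare failure when the cap is hit, which happens only off $\mathcal{G}$. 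Alternatively, impose a global query cap as in the Las Vegas conversion in \cref{thm:randomized-graph}.
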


\begin{proof}
Condition on the complete execution history before a phase begins.  The
active graph is then fixed, whereas the samples used in the new phase are
fresh.  By \cref{lem:row-classification}, the conditional probability that a
particular row--phase classification violates either of the implications
\[
    \text{heavy}\Longrightarrow d(p)\ge2h,
    \qquad
    \text{light}\Longrightarrow d(p)<16h
\]
is at most $e^{-L/10}\le\delta/(2a\Phi)$.  There are at most $a\Phi$
classifications.  A union bound, applied successively under the above
conditioning, therefore gives both implications simultaneously with
probability at least $1-\delta$.  Classifications of different rows may be
correlated because they use the same sampled sets; the union bound does not
require them to be independent.

On this good event, a heavy row starts a phase with degree at least $2h$.
At most $h$ columns are deleted before the next classification, so the row
retains positive degree throughout the phase.  A light row has fewer than
$16h$ active in-neighbors when it becomes explicit, and its stored
neighborhood is subsequently maintained exactly.  Hence no source is missed
and no row is emitted before becoming a source.

The experiments use $O(a\Phi L)$ queries.  Each row becomes explicit at most
once, and \cref{lem:primitive-costs} bounds all calls to
$\mathsf{ListIn}$ by $O(a+ah\log b)$ queries.  Substituting
$\Phi=O(1+b/h)$ and
$L=O(\log(a(1+b/h)/\delta))$ gives the stated query bound.
\end{proof}

\subsection{Maintaining zero columns}\label{sec:zero-column}

We next develop the additional ingredient needed to categorize columns.  The
zero-column structure receives row deletions and column deactivations and must
emit every active column when its indegree becomes zero.  For
$q\in\Bact$, write
\[
    d(q)\coloneqq
    \bigl|\Set{p\in\Aact\mid(p,q)\in E}\bigr|.
\]
Unlike the row case, our oracle cannot directly test whether one column has an
in-neighbor in a specified subset of rows.  We instead classify all implicit
columns collectively by sampling rows and calling $\mathsf{CollectOut}$.

Fix a phase length $h\ge1$.  Here $h$ is a parameter of the zero-column
structure and may be chosen separately from the phase length used by the
zero-row structure.  A phase begins at initialization and after every $h$ row
deletions; column deactivations do not count.  The number of phases is at most
\[
    \Psi\coloneqq1+\left\lceil\frac{a}{h}\right\rceil.
\]
At the beginning of a phase, let $C\subseteq\Bact$ be the set of implicit
columns and put $s\coloneqq|\Aact|$.  If $s<4h$, every column is labeled
light.  Otherwise, set
\[
    k\coloneqq\left\lfloor\frac{s}{4h}\right\rfloor,
    \qquad
    L\coloneqq\left\lceil50\log\frac{2b\Psi}{\delta}\right\rceil,
\]
and perform $L$ independent collective experiments.  In one experiment, it
samples a uniformly random $k$-element set $R\subseteq\Aact$ and calls
$\mathsf{CollectOut}(R,C)$.  A column is \emph{detected} in that experiment
if it belongs to the returned set $D$.  The counter $z(q)$ records the number
of experiments in which $q$ is detected; the witnesses returned by
$\mathsf{CollectOut}$ are not used.  A column is labeled heavy if
$z(q)\ge3L/4$ and light otherwise.  The complete procedure is given in
\cref{alg:column-categorization}; it returns a partition
$(C_{\mathrm{heavy}},C_{\mathrm{light}})$ of $C$.

\begin{algorithm}[H]
\caption{Randomized categorization of implicit columns at a phase start}
\label{alg:column-categorization}
\Procedure{\ClassifyColumns{$C,h,\delta$}}{
    $L\gets\left\lceil50\log(2b\Psi/\delta)\right\rceil$\;
    \If{$C=\emptyset$}{
        \Return $(\emptyset,\emptyset)$\;
    }
    \If{$|\Aact|<4h$}{
        \Return $(\emptyset,C)$\;
    }
    $k\gets\left\lfloor |\Aact|/(4h)\right\rfloor$\;
    initialize $z(q)\gets0$ for every $q\in C$\;
    \For{$\ell\gets1$ \KwTo $L$}{
        choose a uniformly random $k$-element set $R\subseteq\Aact$\;
        $(D,w)\gets\CollectOut{$R,C$}$\;
        \ForEach{$q\in D$}{
            $z(q)\gets z(q)+1$\;
        }
    }
    $C_{\mathrm{heavy}}\gets
        \Set{q\in C\mid z(q)\ge3L/4}$\;
    $C_{\mathrm{light}}\gets C\setminus C_{\mathrm{heavy}}$\;
    \Return $(C_{\mathrm{heavy}},C_{\mathrm{light}})$\;
}
\end{algorithm}

Every heavy column remains implicit.  All light columns are made explicit
simultaneously: call $\mathsf{ListOut}(p,C_{\mathrm{light}})$ for every
$p\in\Aact$.  These calls give the exact active in-neighborhood of every
column in $C_{\mathrm{light}}$.  The stored neighborhoods are then updated
after each row deletion, and an explicit column is emitted exactly when its
neighborhood becomes empty.

\begin{lemma}
\label{lem:column-classification}
For an implicit column $q$ with degree $d(q)$ at the beginning of a phase,
\begin{align*}
    \Pr[q\text{ is labeled heavy}\mid d(q)<2h]&\le e^{-L/8},\\
    \Pr[q\text{ is labeled light}\mid d(q)\ge16h]&\le e^{-L/50}.
\end{align*}
If $|\Aact|<4h$, the relevant implications hold deterministically.
\end{lemma}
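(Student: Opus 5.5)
The plan is to reduce the lemma to a single-experiment detection probability and then apply Chernoff bounds over the $L$ independent experiments. The first step is to note what ``detected'' means. Since $\mathsf{CollectOut}(R,C)$ returns exactly the set of $q\in C$ that have an in-neighbor in $R$, a fixed column $q$ is detected in an experiment if and only if $R$ meets $N(q)$, the set of active in-neighbors of $q$, where $|N(q)|=d(q)$. The witness structure and the order in which rows are processed are irrelevant here. Hence, conditioned on the state at the start of the phase, the indicators for $q$ across the $L$ experiments are i.i.d.\ Bernoulli$(\pi)$, with
\[
\pi = 1-\binom{s-d(q)}{k}\Big/\binom{s}{k}.
\]
The degenerate case $|\Aact|<4h$ is handled first. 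Every column is then labeled light, and $d(q)\le s<4h<16h$, so the implication light $\Rightarrow d(q)<16h$ holds deterministically; heavy labels do not occur at all.

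For the first bound, assume $d(q)<2h$. A union bound over the $d(q)$ in-neighbors gives $\pi\le d(q)\,k/s< 2h\cdot \tfrac{1}{4h}=\tfrac12$. A heavy label requires $z(q)\ge 3L/4$. Hoeffding's inequality gives $\Pr[z\ge L/2+L/4]\le e^{-2L(1/4)^2}=e^{-L/8}$, which is the stated bound.

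For the second bound, assume $d(q)\ge16h$. By monotonicity in $d(q)$, it suffices to treat $d(q)= d\ge16h$. We have $1-\pi=\prod_{i=0}^{k-1}\frac{s-d-i}{s-i}\le(1-d/s)^k\le e^{-dk/s}$. Since $k=\lfloor s/(4h)\rfloor\ge s/(8h)$ (valid because $s\ge4h$), we get $dk/s\ge 16h/(8h)=2$, so $\pi\ge 1-e^{-2}\approx0.865$. A light label means $z(q)<3L/4$, a deviation of at least $(1-e^{-2}-3/4)L\ge 0.115L$ below the mean. Hoeffding gives at most $e^{-2(0.115)^2L}=e^{-0.0264L}$, which is weaker than the claimed $e^{-L/50}=e^{-0.02L}$ by the right margin, so Hoeffding suffices. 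Alternatively, a multiplicative Chernoff bound $\exp(-\mu\varepsilon^2/2)$ with $\varepsilon\ge0.133$ and $\mu\ge0.865L$ gives about $e^{-0.0077L}$, which is too weak, so I would use Hoeffding here.

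The main obstacle is this second constant check: making sure the floor in $k$ does not degrade the lower bound on $\pi$ too much. The inequality $\lfloor x\rfloor\ge x/2$ for $x\ge1$ is exactly what yields $dk/s\ge2$. If a sharper margin were needed, I would instead use $k\ge s/(4h)-1$ together with a case split on the size of $s/(4h)$.
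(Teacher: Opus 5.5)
Your proposal is correct and follows the paper's proof essentially step for step: the degenerate case $s<4h$, the union bound $\mu\le dk/s<\tfrac12$ for $d<2h$, the bound $(1-d/s)^k\le e^{-dk/s}\le e^{-2}$ using $k\ge s/(8h)$ for $d\ge16h$, and Hoeffding in both tails. Two cosmetic slips do not affect the conclusion. The margin $\tfrac14-e^{-2}\approx0.1147$ is slightly below $0.115$, and the resulting bound $e^{-0.0263L}$ is stronger, not weaker, than $e^{-L/50}$.
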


\begin{proof}
The concentration calculation is given in
\cref{app:column-classification-calculation}.
\end{proof}

\begin{theorem}\label{thm:zero-column}
For every adaptive sequence of row deletions and column deactivations, the
zero-column structure emits exactly all active columns of indegree zero with
probability at least $1-\delta$.  Its total number of graph queries is
\[
    O\left(
        \left(
            \frac{a^2}{h^2}+\frac{a}{h}
            +\left(b+\frac{ab}{h}\right)\log b
        \right)
        \log\frac{b(1+a/h)}{\delta}
        +a+\frac{a^2}{h}+bh\log b
    \right).
\]
\end{theorem}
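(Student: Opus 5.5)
The proof follows the template of \cref{thm:zero-row}: first correctness on a good event, then a query count split by source.

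\emph{Correctness.} Condition on the full history before a phase begins. The active graph is then fixed, and the $L$ experiments of the new phase use fresh randomness. By \cref{lem:column-classification}, each column--phase classification violates one of
\[
    \text{heavy}\Longrightarrow d(q)\ge2h,
    \qquad
    \text{light}\Longrightarrow d(q)<16h
\]
with probability at most $e^{-L/50}\le\delta/(2b\Psi)$. There are at most $b\Psi$ classifications, so a sequential union bound puts all of them on a good event of probability at least $1-\delta$. Correlations among columns within one experiment are harmless for the union bound.

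On this good event, a heavy column starts a phase with indegree at least $2h$. At most $h$ rows are deleted before the next classification, so it keeps indegree at least $h\ge1$ and is never a source while implicit. A light column becomes explicit with its exact active in-neighborhood, which the $\mathsf{ListOut}$ calls supply. That neighborhood is then updated on every row deletion, so the column is emitted exactly when its indegree reaches zero. Column deactivations only shrink the sets involved and affect neither argument.

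\emph{Query count.}
\begin{itemize}
\item[(a)] \textbf{Experiments.} Each experiment calls $\mathsf{CollectOut}(R,C)$ with $|R|=k\le s/(4h)\le a/(4h)$. By \cref{lem:primitive-costs} it costs $O(k+|D|\log b)=O(a/h+b\log b)$. There are $L\Psi$ experiments in total, with $\Psi=O(1+a/h)$ and $L=O(\log(b(1+a/h)/\delta))$. The product is
\[
    O\bigl((a/h+b\log b)(1+a/h)L\bigr)
    =O\bigl((a^2/h^2+a/h+(b+ab/h)\log b)L\bigr),
\]
which is the bracketed term of the statement.
\item[(b)] \textbf{Making columns explicit.} In each phase, every active row issues one $\mathsf{ListOut}(p,C_{\mathrm{light}})$ call. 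The constant part of \cref{lem:primitive-costs} then costs $O(a)$ per phase, hence $O(a\Psi)=O(a+a^2/h)$ overall. Phases with $C_{\mathrm{light}}=\emptyset$ can be skipped without changing this bound.
\item[(c)] \textbf{Listed arcs.} The $k\log b$ part of \cref{lem:primitive-costs} is charged to the arcs discovered. Each column becomes explicit at most once, and on the good event it does so with fewer than $16h$ in-neighbors. This includes the case $s<4h$, where the degree is trivially below $4h$. Hence at most $16bh$ arcs are ever listed, costing $O(bh\log b)$.
\end{itemize}
Summing (a), (b) and (c) gives the stated bound.

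\emph{Main obstacle.} The bound in (c) only holds on the good event: a misclassified light column could list up to $a$ arcs. I would handle this in one of two ways. One option is to state the query bound conditionally on the good event, as is implicit in \cref{thm:zero-row}. The other is to cap each $\mathsf{ListOut}$ phase and fall back when the cap is exceeded, which keeps the bound worst-case. A second point needing care is the adaptivity of the deletion sequence. It is handled by the phase-wise conditioning, since the deletions within a phase do not influence that phase's samples.
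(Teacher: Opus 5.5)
Your proposal is correct and follows essentially the same route as the paper. Correctness comes from phase-wise conditioning and a union bound over at most $b\Psi$ classifications, and the query count splits into the collective experiments, one terminal negative query per active row and phase in the $\mathsf{ListOut}$ calls, and fewer than $16h$ positive discoveries per column made explicit.

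The differences are minor. Your cruder bound $s_j\le a$ gives the same order as the paper's arithmetic-progression sum. Your remark that the $O(bh\log b)$ term holds only on the good event states explicitly what the paper's proof assumes tacitly and later handles by the query cap in its Las Vegas conversion.
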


\begin{proof}
As in the row case, condition on the complete history before each phase.
Fresh uniform samples give the probabilities in
\cref{lem:column-classification}, even when the deletion sequence is
adaptive.  There are at most $b\Psi$ column--phase classifications, and our
choice of $L$ makes the conditional failure probability of each at most
$\delta/(2b\Psi)$.  A union bound therefore implies, with probability at
least $1-\delta$, that every heavy column starts its phase with degree at
least $2h$ and every light column has degree below $16h$.  The detection
events of different columns need not be independent.

On this good event, an implicit heavy column retains positive degree during
the next at most $h$ row deletions.  A light column becomes explicit with
fewer than $16h$ active in-neighbors and is maintained exactly thereafter.
Thus the structure emits precisely the zero-degree columns.

Let $s_j$ be the number of active rows at the beginning of phase $j$.  Since
$s_j\le\max\Set{a-(j-1)h,0}$, summing this arithmetic progression gives
$\sum_j s_j=O(a+a^2/h)$ and
$\sum_j s_j/h=O(a/h+a^2/h^2)$.  By
\cref{lem:primitive-costs}, the $L$ collective experiments in phase $j$ use
$O(L(s_j/h+b\log b))$ queries.  Hence all classification experiments use
\[
    O\left(
        L\left(
            \frac{a^2}{h^2}+\frac{a}{h}
            +\left(b+\frac{ab}{h}\right)\log b
        \right)
    \right)
\]
queries.
Making newly light columns explicit contributes one final negative query per
active row and phase, totaling $O(a+a^2/h)$.  Each column becomes explicit
only once and then has fewer than $16h$ in-neighbors, so all successful
searches in these calls cost $O(bh\log b)$.  Combining these costs and using
$L=O(\log(b(1+a/h)/\delta))$ proves the theorem.
\end{proof}

\subsection{Source peeling and the randomized bound}
\label{sec:randomized-source-peeling}

We now run the two structures together on the directed bipartite graph $G$.
Whenever either structure emits a source, we delete that vertex and notify
both structures as described at the end of \cref{sec:source-peeling}.  Thus
the common active sets follow exactly the source-peeling process.

For the zero-row structure choose
\[
    h_1\coloneqq\lceil\sqrt b\rceil,
\]
and for the zero-column structure choose
\[
    h_2\coloneqq
    \min\Set{a,\max\Set{1,
        \left\lceil\sqrt{\frac{a(a+b)}{b}}\right\rceil}}.
\]

\begin{theorem}\label{thm:randomized-graph}
For every constant $c>0$, there is a randomized algorithm that, with
probability at least $1-(a+b)^{-c}$, finds a chordless directed cycle in $G$
or correctly reports that $G$ is acyclic, using
\[
    \widetilde O\left(
        a+b+\sqrt{ab(a+b)}+a\sqrt b
    \right)
\]
graph queries.  Moreover, there is a Las Vegas algorithm with the same
expected query bound.
\end{theorem}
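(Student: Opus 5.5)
The plan is to run the zero-row structure of \cref{thm:zero-row} with $h_1$ and the zero-column structure of \cref{thm:zero-column} with $h_2$, each with failure probability $\delta\coloneqq\tfrac12(a+b)^{-c}$. Whenever either structure emits a vertex, we delete it and notify both structures as described at the end of \cref{sec:source-peeling}. The process stops when neither structure has a pending emission.

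\textbf{Correctness.} By a union bound, with probability at least $1-(a+b)^{-c}$ both structures emit exactly the current sources throughout. The execution is then exactly the source peeling of \cref{lem:source-peeling}.
\begin{enumerate}
\item If every vertex is deleted, we report that $G$ is acyclic.
\item Otherwise the remaining active sets $A',B'$ induce a source-free graph. \Cref{rem:recover-cycle} then yields a directed cycle, and \cref{lem:cycle-shortening} shortens it to a chordless one.
\end{enumerate}
Together these cost $O((a+b)\log b)$ queries. Adaptivity of the deletion sequence is already covered by both theorems.

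\textbf{Query count.} We substitute the parameters into the two bounds, with all logarithmic factors $\log\frac{a(1+b/h)}{\delta}$ and the like being $O(\log(a+b))$.
\begin{enumerate}
\item Zero-row structure, $h_1=\lceil\sqrt b\rceil$: the cost is $\widetilde O(a(1+\sqrt b)+a\sqrt b)=\widetilde O(a\sqrt b)$.
\item Zero-column structure: the bound reduces to $\widetilde O(a^2/h^2+a/h+b+ab/h+a+a^2/h+bh)$. I expect this to be the main obstacle, because the choice of $h_2$ involves clamping. Put $h^*=\sqrt{a(a+b)/b}$.
\begin{itemize}
\item In the generic case $1\le h^*\le a$ we have $h_2=\Theta(h^*)$. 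Then
\[
\frac{a^2+ab}{h_2}+bh_2=O\bigl(\sqrt{ab(a+b)}\bigr).
\]
Also $a^2/h_2^2\le a^2b/(a(a+b))\le b$ and $a/h_2\le a$.
\item If $h^*>a$, then $h_2=a$. This case occurs only when $a+b>ab$, that is, when $\min\{a,b\}=1$; a short check shows every term is $O(a+b)$. More simply, $bh_2=ba\le b\,h^*$ remains within the bound, and $(a^2+ab)/a=a+b$.
\item The case $h^*<1$ cannot occur, since $h^*\ge\sqrt{a}\ge1$ when $a\ge1$.
\end{itemize}
\end{enumerate}
Summing the two contributions gives the stated bound $\widetilde O(a+b+\sqrt{ab(a+b)}+a\sqrt b)$. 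The queries for \cref{rem:recover-cycle} and \cref{lem:cycle-shortening} add only $O((a+b)\log b)$.

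\textbf{Las Vegas version.} The output can be verified cheaply in both cases.
\begin{itemize}
\item A claimed chordless cycle is verified by checking its arcs and non-chords with $\mathsf{FindIn}$ and $\mathsf{FindOut}$ searches as in \cref{lem:cycle-shortening}, using $O((a+b)\log b)$ queries.
\item For an "acyclic" answer, the peeling order is verified as a topological order. Each time a vertex was emitted it must have had no active in-neighbor. A row is checked with one $\mathsf{HasIn}$ query against the active columns at that time. Columns are checked collectively: replaying the order, each row $p$ must have no out-arc to columns still active when $p$ is deleted. This is a single $\mathsf{HasOut}(p,\cdot)$ query per row, so the whole check costs $O(a+b)$ queries.
\end{itemize}
On failure we rerun with fresh randomness. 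Each run succeeds with probability at least $1/2$, so the expected number of runs is $O(1)$ and the expected query bound is the same.
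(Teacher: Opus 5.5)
Your high-probability argument is the same as the paper's: the same $\delta=\frac12(a+b)^{-c}$, the same union bound over the two structures, and the same substitution of $h_1,h_2$. Your case split on $h^*$ is, if anything, more careful. The Las Vegas part, however, does not work as written.

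First, your column check is reversed. An arc $(p,q)$ violates the peeling order only if the column $q$ was deleted \emph{before} the row $p$. The test must therefore be $\mathsf{HasOut}(p,Q_p)=0$, where $Q_p$ is the set of columns deleted before $p$, not the set of columns still active when $p$ is deleted. Consider the acyclic graph with the single arc $(p,q)$: $p$ is always peeled while $q$ is still active, so your check rejects every run and the restart loop never terminates.

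Second, and more substantively, you verify the two outputs that can never be wrong and leave unhandled the two things that can go wrong. Both structures emit a vertex only after its exactly listed neighborhood has become empty. Hence ``every vertex deleted'' always certifies acyclicity, and any cycle assembled from arcs returned by $\mathsf{FindIn}$ and $\mathsf{CollectOut}$ is genuine.

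The error is one-sided. A vertex wrongly labeled heavy can become a source without being emitted, so peeling may stall on a nonempty graph that still has a source. Your appeal to \cref{rem:recover-cycle} is valid only on the good event, so this stall must be detected explicitly. For example, run $\mathsf{FindIn}$ on every active row and $\mathsf{CollectOut}$ on the active rows to confirm that no source remains.

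Also, ``the expected number of runs is $O(1)$'' yields the expected query bound only if one run has expected cost of the order of the displayed bound. On the bad event, a high-degree vertex wrongly labeled light makes the $\mathsf{ListIn}$ or $\mathsf{ListOut}$ calls cost up to order $ab\log b$, and your argument does not control this. The paper resolves both issues by giving each trial a query cap of a large constant times the displayed bound, and restarting on overflow or on a failed no-source check. Alternatively, you could fix $c\ge1$ and note that every run costs deterministically $\widetilde O(a+b+\sqrt{ab(a+b)}+a\sqrt b+ab)$. The bad event then contributes only $\widetilde O(ab/(a+b))$ in expectation.
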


\begin{proof}
Set $\delta=\frac12(a+b)^{-c}$ for each of the two structures.  By
\cref{thm:zero-row,thm:zero-column} and a union bound, their joint good event
has probability at least $1-(a+b)^{-c}$.  No independence between the two
structures is required.  On this event, source peeling either deletes every
vertex, certifying acyclicity, or stops at a nonempty source-free graph.  In
the latter case, \cref{rem:recover-cycle,lem:cycle-shortening} produces a
chordless directed cycle using $\widetilde O(a+b)$ further queries.

The zero-row structure costs $\widetilde O(a\sqrt b+a)$.  The zero-column
structure costs
\[
    \widetilde O\left(
        \frac{a(a+b)}{h_2}+bh_2+a+b
    \right)
    =\widetilde O\left(a+b+\sqrt{ab(a+b)}\right).
\]
Indeed, this follows by balancing the first two terms.  If the square-root
choice exceeds $a$, then $a+b>ab$, and taking $h_2=a$ makes the expression
$O(a+b)$.  This proves the high-probability bound.

For completeness, we convert the procedure to a Las Vegas algorithm.  The
randomized structures have one-sided error: a vertex is emitted only after
its explicitly stored neighborhood has become empty.  Run independent trials,
each with a constant failure parameter and a query cap equal to a sufficiently
large constant times the displayed bound.  If a trial stops with a nonempty
active graph, use $\mathsf{FindIn}$ on every active row and
$\mathsf{CollectOut}$ on all active rows to verify that no source remains.
If the cap is exceeded or the verification finds a source, restart.  Otherwise
the trial either has deleted every vertex, which certifies acyclicity, or the
verified source-free graph yields a chordless cycle.  A trial succeeds with
constant probability, so the expected number of trials is constant.
\end{proof}

Applying \cref{thm:randomized-graph} to the exchange graph gives the desired
oracle bound.

\begin{corollary}\label{cor:randomized-matroid}
Another Matroid Intersection admits a Las Vegas algorithm using
\[
    \widetilde O(n\sqrt r)
\]
expected independence queries.
\end{corollary}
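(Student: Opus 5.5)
The plan is to instantiate \cref{thm:randomized-graph} on the exchange graph $D_S$ via \cref{rem:exchange-query-model}. There $A=\nS$ and $B=S$, so $a=n-r$ and $b=r$, and each graph query costs exactly one independence query. The Las Vegas algorithm of \cref{thm:randomized-graph} returns either a correct certificate of acyclicity or a chordless directed cycle $C$ in $D_S$. In the first case, \cref{thm:cycle-criterion} shows that $S$ is the unique maximum common independent set, and we report that none other exists. In the second case, \cref{cor:cycle-output} shows that $S\mathbin{\triangle}V(C)$ is a maximum common independent set different from $S$, and we output it. Forming this output requires no further queries.

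It remains to bound the expected query count
\[
    \widetilde O\bigl(a+b+\sqrt{ab(a+b)}+a\sqrt b\bigr)
\]
in terms of $n$ and $r$. We have $a+b=n$, so $a+b=O(n)$. Next, $a\sqrt b\le n\sqrt r$. Finally, $\sqrt{ab(a+b)}\le\sqrt{n\cdot r\cdot n}=n\sqrt r$. If $r\ge1$, all terms are $O(n\sqrt r)$.

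Two degenerate cases need to be handled separately.
\begin{itemize}
    \item If $r=0$, then $S=\emptyset$ is maximum, so the only common independent set of size zero is $\emptyset$. We answer ``no'' with no queries.
    \item If $a=0$, then $S=V$ and no other set has size $r$, so again we answer ``no'' with no queries.
\end{itemize}
These cases are needed because \cref{sec:randomized} assumes $a,b\ge1$.

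The only mildly delicate point is that the $\widetilde O$ of \cref{thm:randomized-graph} hides polylogarithmic factors in $a+b=n$. These are polylogarithmic in $n$, as the paper's convention requires. The expectation bound transfers directly because the query correspondence is exact. No step here is a real obstacle; the substance lies in \cref{thm:randomized-graph}.
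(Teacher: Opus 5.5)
Your proposal is correct and follows the paper's proof. Both dispose of the degenerate cases $r=0$ and $n-r=0$, instantiate \cref{thm:randomized-graph} with $a=n-r$, $b=r$ via \cref{rem:exchange-query-model}, bound $n+\sqrt{(n-r)rn}+(n-r)\sqrt r=O(n\sqrt r)$, and convert the returned chordless cycle into a new solution by \cref{cor:cycle-output}; your citation of \cref{thm:cycle-criterion} for the acyclic case only makes explicit a step the paper leaves implicit.
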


\begin{proof}
If $r=0$ or $n-r=0$, the given solution is unique.  Otherwise substitute
$a=n-r$ and $b=r$ in \cref{thm:randomized-graph}.  By
\cref{rem:exchange-query-model}, every graph query is one independence query.
Furthermore,
\[
    n+\sqrt{(n-r)rn}+(n-r)\sqrt r=O(n\sqrt r).
\]
A chordless cycle yields another maximum common independent set by
\cref{cor:cycle-output}.
\end{proof}

%----------------------------------------------------------------

\section{Deterministic algorithm}\label{sec:deterministic}

We now replace the randomized classifications by deterministic witness
structures.  For every active vertex not yet known to be a source, we store
explicit evidence of positive indegree.  When an evidence vertex is deleted,
the structure either finds replacement evidence or discovers that the
monitored vertex has become a source.

We write $\mathsf{Emit}(v)$ for inserting a newly discovered source $v$ into
a queue maintained by the source-peeling driver.  The insertion itself makes
no graph query and does not interrupt the current data-structure operation.
After that operation returns, the driver repeatedly extracts a queued vertex
that is still active, removes it from the common active graph, and invokes the
corresponding deactivation and deletion operations of the two structures.
Thus emissions produced by those operations are processed only after the
operation that produced them has returned.  Duplicate emissions may simply be
discarded.  At initialization, both structures are initialized before the
driver starts processing the accumulated queue.  As before, all explicitly
stored sets, counters, and reverse lists are available without oracle queries.

\subsection{Deterministic zero rows by minimum-load witnesses}
\label{sec:det-zero-row}

The zero-row structure in this subsection is the deterministic
categorization of Blikstad--van den Brand--Mukhopadhyay--Nanongkai
~\cite[Sec.~5.2]{BBMN21}, specialized to zero-row maintenance.  It receives
column deletions and row deactivations and emits rows when their indegree
becomes zero.

For every active row $p$ not yet emitted, store either an active in-neighbor
$w(p)\in\Bact$, called its \emph{witness}, or the value $\bot$.  For each
active column $q$, maintain the reverse list
\[
    \mathsf{Users}(q)
    \coloneqq\Set{p\in\Aact\mid w(p)=q}
\]
and its \emph{load} $\lambda(q)\coloneqq|\mathsf{Users}(q)|$.

Whenever a row needs a witness, order the active columns by nondecreasing
load, breaking ties by an arbitrary fixed order.  Binary search on prefixes
with $\mathsf{HasIn}$ returns an in-neighbor of minimum load or certifies that
none exists.  We denote this procedure by
$\mathsf{MinLoadNeighbor}$.  The complete structure is shown in
\cref{alg:det-zero-row}.  In the comments attached to $\mathsf{Emit}$, ``report''
means handing the vertex to the source-peeling driver described above.

\begin{algorithm}[H]
\caption{Deterministic zero-row maintenance by minimum-load witnesses}
\label{alg:det-zero-row}

\Function{\MinLoadNeighbor{$p$}}{
    let $q_1,\ldots,q_t$ be the columns in $\Bact$, ordered by
    $(\lambda(q),q)$\;
    \If{$t=0$}{\Return $\bot$\;}
    \If{\HasIn{$p,\Set{q_1,\ldots,q_t}$} $=0$}{
        \Return $\bot$\;
    }
    binary search for the least $j$ such that
    $\HasIn(p,\Set{q_1,\ldots,q_j})=1$\;
    \Return $q_j$\;
}

\Procedure{\AssignWitness{$p$}}{
    $q\gets\MinLoadNeighbor{$p$}$\;
    \eIf{$q=\bot$}{
        $\Emit(p)$\Comment*[r]{report $p$ as a zero row}
    }{
        $w(p)\gets q$\;
        $\mathsf{Users}(q)\gets\mathsf{Users}(q)+p$\;
        $\lambda(q)\gets\lambda(q)+1$\;
    }
}

\Procedure{\InitializeZeroRows{}}{
    initialize $w(p)\gets\bot$ for $p\in\Aact$ and
    $\mathsf{Users}(q)\gets\emptyset$, $\lambda(q)\gets0$ for $q\in\Bact$\;
    \ForEach{$p\in\Aact$}{\AssignWitness{$p$}\;}
}

\Procedure{\DeleteColumn{$q$}}{
    $U\gets\mathsf{Users}(q)$ and $\Bact\gets\Bact-q$\;
    \ForEach{$p\in U$}{
        $w(p)\gets\bot$\;
        $\AssignWitness(p)$\;
    }
    $\mathsf{Users}(q)\gets\emptyset$ and $\lambda(q)\gets0$\;
}

\Procedure{\DeactivateRow{$p$}}{
    $\Aact\gets\Aact-p$\;
    \If{$w(p)\ne\bot$}{
        $q\gets w(p)$\;
        $\mathsf{Users}(q)\gets\mathsf{Users}(q)-p$\;
        $\lambda(q)\gets\lambda(q)-1$ and $w(p)\gets\bot$\;
    }
}
\end{algorithm}

\begin{theorem}
\label{thm:det-zero-row}
The minimum-load witness structure emits exactly all active rows of indegree
zero under every sequence of column deletions and row deactivations.  It uses
\[
    O\bigl((a+b+a\sqrt b)\log b\bigr)
\]
graph queries in total.
\end{theorem}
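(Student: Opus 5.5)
The plan is to prove correctness by a witness invariant, and then to bound the queries by counting calls to $\mathsf{AssignWitness}$. The only graph queries are made inside $\mathsf{MinLoadNeighbor}$. Each call uses one test on the full prefix plus a binary search over at most $b$ prefixes, so it costs $O(\log b)$ queries, in the same way as $\mathsf{FindIn}$ in \cref{lem:primitive-costs}. It therefore suffices to show that the total number of $\mathsf{AssignWitness}$ calls is $O(a+b+a\sqrt b)$.

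For correctness I would maintain the following invariant: after each operation returns, every active row not yet emitted has $w(p)\in\Bact$ with $(w(p),p)\in E$. The $\mathsf{Users}$ lists and loads $\lambda$ must also agree with the witness map. The invariant holds after initialization, because each row receives an in-neighbor or is emitted. A column deletion touches exactly the rows in $\mathsf{Users}(q)$, and these are the only rows whose witness becomes inactive; each of them is reassigned. A row deactivation only removes bookkeeping.
\begin{itemize}
\item A row is emitted only when $\mathsf{HasIn}(p,\Bact)=0$, so at that moment its indegree is zero. Since $\Bact$ only shrinks, the indegree stays zero.
\item Conversely, a row whose indegree drops to zero must have lost its witness at that deletion. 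It is then reassigned, and $\mathsf{MinLoadNeighbor}$ returns $\bot$, so the row is emitted.
\end{itemize}

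For the count, there are $a$ initial calls. Each later call is caused by deleting a column $q$ and equals $\lambda(q)$ at that moment. It therefore remains to bound $\sum_q\lambda_{\mathrm{del}}(q)$ by $O(b+a\sqrt b)$. I would split the deletions by the threshold $\tau\coloneqq\max\{1,a/\sqrt b\}$.
\begin{itemize}
\item \emph{Light deletions}, with $\lambda_{\mathrm{del}}(q)\le\tau$. Each column is deleted once, so these contribute at most $b\tau=O(b+a\sqrt b)$.
\item \emph{Heavy deletions}, with $\lambda_{\mathrm{del}}(q)>\tau$. Since $\sum_q\lambda(q)\le a$ at all times, at most $a/\tau\le\sqrt b$ columns have load above $\tau$ at any moment.
\end{itemize}

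The main obstacle is turning this instantaneous bound into a bound on the total heavy contribution. Loads change over time: a column can climb above $\tau$ and then be deleted, and the reassigned rows feed other columns. The minimum-load rule must be what prevents repeated cascades. I would first try a per-row charging argument. When $p$ picks $q$, every active in-neighbor of $p$ has load at least $\lambda(q)$ at that moment. For a row to lose a heavy witness $k$ times, the minimum load among its in-neighbors must repeatedly have reached about $\tau$. I would try to show that such events force at least $k$ distinct heavy columns to be deleted, with the columns effectively ordered by when they became heavy. This would give at most $O(\sqrt b)$ heavy losses per row, and hence $O(a\sqrt b)$ in total.

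If that direct charging fails, I would fall back on the amortized load analysis of the deterministic categorization in \cite[Sec.~5.2]{BBMN21}, of which this structure is a specialization. I would verify that its bound on total reassignments, $O(a\sqrt b)$ plus the $O(b)$ and $O(a)$ overheads, applies verbatim to zero-row maintenance. Multiplying the resulting $O(a+b+a\sqrt b)$ calls by $O(\log b)$ queries per call gives the stated bound.
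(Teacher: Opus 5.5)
Your correctness invariant, the reduction to counting $\mathsf{AssignWitness}$ calls, the identity (number of calls) $=a+\sum_q\lambda_{\mathrm{del}}(q)$, and the bound $b\tau$ for light deletions are all fine. The gap is the heavy part, which carries the whole bound. You leave it unproved, and the per-row charging you sketch rests on a heuristic that is false. The minimum-load rule constrains loads only at the moment of assignment, whereas you classify by load at deletion time.

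For a counterexample, let $p$ be adjacent to $q$ and to many other columns, and let more than $\tau$ further rows have $q$ as their only in-neighbor. If $p$ is initialized first and ties favor $q$, then $p$ takes $q$ at load $0$. The other rows then push $\lambda(q)$ above $\tau$. Deleting $q$ gives $p$ a heavy loss while the minimum load among its in-neighbors is still $0$. So a heavy loss tells you nothing about the loads of $p$'s other in-neighbors, and your hoped-for $O(\sqrt b)$ heavy losses per row has no support. Checking that the analysis of Blikstad et al.\ applies ``verbatim'' is a citation, not a proof.

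The missing idea is to classify \emph{assignments}, not deletions, by the load of the chosen column immediately before the assignment, with threshold $\theta=a/\sqrt b$. Every reassignment triggered by a deletion destroys an incidence created by an earlier successful assignment, so it suffices to count successful assignments. With this count your deletion-time split becomes unnecessary.

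\emph{Low-load assignments.} A fixed column receives at most $\lfloor\theta\rfloor+1$ of them, plus at most one more per load decrement caused by a row deactivation. This gives at most $b(\theta+1)+a$ in total.

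\emph{High-load assignments.} At the first high-load assignment to a row $p$, minimality of the chosen load forces every active in-neighbor of $p$ to have load greater than $\theta$. Since $\sum_q\lambda(q)\le a$, the row $p$ then has fewer than $\sqrt b$ active in-neighbors. Every later witness of $p$, of either type, is a distinct member of this shrinking set, because a witness is replaced only after its column is deleted. Hence each row makes $O(\sqrt b)$ assignments from that point on, giving $O(a\sqrt b)$ in total.

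Together these complete the $O(a+b+a\sqrt b)$ count.
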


\begin{proof}
Whenever $\mathsf{AssignWitness}(p)$ is called, $w(p)=\bot$.  A successful
search stores a genuine active in-neighbor, whereas a failed search certifies
that no active in-neighbor exists and emits $p$.  Thus the structure is
correct.  Since active sets only shrink, an emitted row never becomes
positive again.

It remains to bound the number of successful witness assignments.  Assume
$a,b\ge1$ and put $\theta\coloneqq a/\sqrt b$.  Call an assignment to $q$
\emph{low-load} if $\lambda(q)\le\theta$ immediately before the assignment,
and \emph{high-load} otherwise.  Before a fixed column is deleted, it receives
at most $\lfloor\theta\rfloor+1$ low-load assignments, except for assignments
that compensate for earlier load decrements caused by row deactivations.
Every column is deleted at most once, and every row deactivation causes at
most one such decrement.  Hence the number of low-load assignments is at most
$b(\theta+1)+a=O(a\sqrt b+a+b)$.

Consider the first high-load assignment made to a fixed row $p$.  Since the
chosen column has minimum load among all active in-neighbors of $p$, every
such in-neighbor then has load greater than $\theta$.  At all times,
\[
    \sum_{q\in\Bact}\lambda(q)\le a,
\]
so $p$ has fewer than $a/\theta=\sqrt b$ active in-neighbors at that moment.
Every subsequent witness assigned to $p$ is a different member of this
shrinking set, because a witness is replaced only after its column is
deleted.  Thus there are $O(a\sqrt b)$ high-load assignments in total.

Each successful assignment and the final failed search of each emitted row
uses $O(\log b)$ queries.  The claimed bound follows.
\end{proof}

\subsection{Deterministic zero columns by capacity saturation}
\label{sec:det-zero-column}

We next give the new deterministic ingredient for zero-column maintenance;
the deterministic categorization of~\cite[Sec.~5.2]{BBMN21} does not supply this
transposed structure under the asymmetric row-subset query interface.  It
receives row deletions and column deactivations and emits columns when their
indegree becomes zero.  For $q\in\Bact$, write
\[
    N(q)\coloneqq\Set{p\in\Aact\mid(p,q)\in E}
\]
for its current active in-neighborhood.

Fix integers $h,\tau$ with $1\le h\le a$ and $\tau\ge1$.  The phase length
$h$ may be chosen independently of all phase lengths used earlier.  It is
also the number of distinct witness rows required by a regular column at a
phase start.  The \emph{capacity} $\tau$ is the maximum number of regular
columns for which one row may simultaneously serve as a witness.

The capacity controls cumulative witness turnover.  Without it, one deleted
row could destroy witnesses of all $b$ columns, and repeated deletions could
cause $\Theta(ab)$ assignments.  A capacity of $\tau$ limits the number of
incidences destroyed by one row deletion to $\tau$.  Its cost is that a
column may fail to acquire $h$ witnesses because adjacent rows are already
saturated.  We then enumerate those saturated rows collectively.  The terms
$a\tau$ and $b^2h/\tau$ in the analysis below quantify this tradeoff.

An active column is in exactly one of two states:
\begin{itemize}
    \item A \emph{regular column} $q$ stores a witness set
    $W(q)\subseteq N(q)$ of distinct active rows.  Immediately after a
    repair, $|W(q)|=h$.
    \item An \emph{explicit column} $q$ stores its entire active
    neighborhood $\mathsf{Adj}(q)=N(q)$.  It never becomes regular again.
\end{itemize}
A pair $(p,q)$ with $p\in W(q)$ is a \emph{witness incidence}.  The
\emph{row load}
\[
    \lambda(p)\coloneqq
    \bigl|\Set{q\in\Bact\mid q\text{ is regular and }p\in W(q)}\bigr|
\]
is always at most $\tau$.  A regular column is \emph{deficient} during a
repair if it has fewer than $h$ witnesses.  Initialization performs a repair,
and another repair is triggered after every $h$ row deletions.  Column
deactivations do not count toward the phase length.  We use $c$ for the number
of row deletions since the most recent repair, so $0\le c<h$ between repairs.

\Cref{alg:capacity-repair} gives the repair procedure.  The set
$R_{\mathrm{sat}}$ is defined before any load of a newly explicit column is
released; this timing is essential for the certificate proved below.

\begin{algorithm}[H]
\caption{$\mathsf{RepairColumns}(h,\tau)$: capacity-saturation repair}
\label{alg:capacity-repair}
\Procedure{\RepairColumns{$h,\tau$}}{
    $C_{\mathrm{def}}\gets
        \Set{q\in\Bact\mid q\text{ is regular and }|W(q)|<h}$\;
    \ForEach{$p\in\Aact$ in a fixed order}{
        \While{$\lambda(p)<\tau$}{
            $C_p\gets\Set{q\in C_{\mathrm{def}}\mid p\notin W(q)}$\;
            $q\gets\FindOut{$p,C_p$}$\;
            \If{$q=\bot$}{\Break\;}
            $W(q)\gets W(q)+p$ and $\lambda(p)\gets\lambda(p)+1$\;
            \If{$|W(q)|=h$}{
                $C_{\mathrm{def}}\gets C_{\mathrm{def}}-q$\;
            }
        }
    }
    $R_{\mathrm{sat}}\gets\Set{p\in\Aact\mid\lambda(p)=\tau}$
        \Comment*[r]{define before releasing any load}
    \ForEach{$q\in C_{\mathrm{def}}$}{
        $\mathsf{Adj}(q)\gets W(q)$\;
    }
    \ForEach{$p\in R_{\mathrm{sat}}$}{
        $H\gets\ListOut{$p,C_{\mathrm{def}}$}$\;
        \ForEach{$q\in H$}{
            $\mathsf{Adj}(q)\gets\mathsf{Adj}(q)+p$\;
        }
    }
    \ForEach{$q\in C_{\mathrm{def}}$}{
        mark $q$ explicit\;
        \ForEach{$p\in W(q)$}{
            $\lambda(p)\gets\lambda(p)-1$\;
        }
        $W(q)\gets\emptyset$\;
        \If{$\mathsf{Adj}(q)=\emptyset$}{
            $\Emit(q)$\Comment*[r]{report $q$ as a zero column}
        }
    }
    $c\gets0$\Comment*[r]{$c$ counts row deletions since this repair}
}
\end{algorithm}

Initially every active column is regular with an empty witness set.  The
update procedures surrounding repairs are given in
\cref{alg:det-zero-column-updates}.  Deleting a row removes all of its stored
incidences and deletes it from every explicit neighborhood.  Deactivating a
regular column releases its witness incidences.  Explicit neighborhoods are
maintained exactly.

\begin{algorithm}[H]
\caption{Initialization and updates for deterministic zero-column maintenance}
\label{alg:det-zero-column-updates}
\Procedure{\InitializeZeroColumns{$h,\tau$}}{
    initialize $\lambda(p)\gets0$ for every $p\in\Aact$\;
    mark every $q\in\Bact$ regular and set $W(q)\gets\emptyset$\;
    $\RepairColumns(h,\tau)$\;
}

\Procedure{\DeleteRow{$p$}}{
    $\Aact\gets\Aact-p$\;
    \ForEach{regular $q\in\Bact$ with $p\in W(q)$}{
        $W(q)\gets W(q)-p$\;
    }
    $\lambda(p)\gets0$\;
    $H\gets\Set{q\in\Bact\mid q\text{ is explicit and }
        p\in\mathsf{Adj}(q)}$\;
    \ForEach{$q\in H$}{
        $\mathsf{Adj}(q)\gets\mathsf{Adj}(q)-p$\;
        \If{$\mathsf{Adj}(q)=\emptyset$}{
            $\Emit(q)$\Comment*[r]{report $q$ as a zero column}
        }
    }
    $c\gets c+1$\;
    \If{$c=h$}{$\RepairColumns(h,\tau)$\;}
}

\Procedure{\DeactivateColumn{$q$}}{
    $\Bact\gets\Bact-q$\;
    \If{$q$ is regular}{
        \ForEach{$p\in W(q)$}{
            $\lambda(p)\gets\lambda(p)-1$\;
        }
        $W(q)\gets\emptyset$\;
    }
    discard $\mathsf{Adj}(q)$ if $q$ is explicit\;
}
\end{algorithm}

\begin{lemma}\label{lem:saturation-certificate}
At the end of the row scan in \cref{alg:capacity-repair}, before any witness
load of a newly explicit column is released,
\[
    N(q)\subseteq W(q)\cup R_{\mathrm{sat}}
    \quad(q\in C_{\mathrm{def}}),
    \qquad
    |R_{\mathrm{sat}}|\le\frac{bh}{\tau}.
\]
Consequently, the enumeration over
$R_{\mathrm{sat}}\times C_{\mathrm{def}}$ computes
$\mathsf{Adj}(q)=N(q)$ exactly for every $q\in C_{\mathrm{def}}$.
\end{lemma}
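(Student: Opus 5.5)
We prove the two claims separately, then combine them for the enumeration statement.

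\emph{Covering claim.} Fix $q\in C_{\mathrm{def}}$ at the end of the row scan and a row $p\in N(q)$ with $p\notin W(q)$; we show $p\in R_{\mathrm{sat}}$. Two facts carry the argument. First, witness sets and the set $C_{\mathrm{def}}$ only change during the scan by additions to $W(\cdot)$ and removals from $C_{\mathrm{def}}$. Second, since $q$ ends in $C_{\mathrm{def}}$, it was in $C_{\mathrm{def}}$ throughout the scan. Consider the moment the scan processed $p$. Its inner loop terminated either because $\lambda(p)=\tau$, or because $\mathsf{FindOut}(p,C_p)$ returned $\bot$. In the second case, $q\in C_{\mathrm{def}}$ and $p\notin W(q)$ held at that moment, because witness sets only grow and $p\notin W(q)$ at the end. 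So $q\in C_p$, and $(p,q)\in E$ would have been found, a contradiction. Hence $\lambda(p)=\tau$ when the scan left $p$. Loads only increase during the scan: the decrements happen only in $\mathsf{DeleteRow}$, $\mathsf{DeactivateColumn}$, and the release step, none of which runs before $R_{\mathrm{sat}}$ is defined. Therefore $\lambda(p)=\tau$ at the definition of $R_{\mathrm{sat}}$, so $p\in R_{\mathrm{sat}}$. Here $N(q)$ refers to the active rows at repair time, and $p\in\Aact$ because the scan ranges over $\Aact$.

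\emph{Size bound.} Double count witness incidences. Every row in $R_{\mathrm{sat}}$ has load exactly $\tau$, so $\tau|R_{\mathrm{sat}}|\le\sum_p\lambda(p)$. This sum counts pairs $(p,q)$ with $q$ regular and $p\in W(q)$, which is at most $\sum_{q\text{ regular}}|W(q)|$. We need $|W(q)|\le h$ for every regular $q$. During the scan a column receives witnesses only while it lies in $C_{\mathrm{def}}$, that is, while $|W(q)|<h$, and it is removed once $|W(q)|=h$. Outside repairs, witness sets only shrink. Hence $|W(q)|\le h$ always, and the sum is at most $bh$. This gives $|R_{\mathrm{sat}}|\le bh/\tau$. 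The main subtlety in the proof is the timing invariant used in the covering claim, namely that loads are monotone during the scan; this is exactly why $R_{\mathrm{sat}}$ is defined before any load is released.

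\emph{Consequence.} $\mathsf{Adj}(q)$ is initialised to $W(q)\subseteq N(q)$, since witnesses are genuine active in-neighbours. It then receives exactly those $p\in R_{\mathrm{sat}}$ with $(p,q)\in E$, and these form $N(q)\cap R_{\mathrm{sat}}$ because $\mathsf{ListOut}(p,C_{\mathrm{def}})$ returns every out-neighbour of $p$ in $C_{\mathrm{def}}$. So $\mathsf{Adj}(q)=W(q)\cup\bigl(N(q)\cap R_{\mathrm{sat}}\bigr)$. By the covering claim this set equals $N(q)$.
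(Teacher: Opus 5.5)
Your proof is correct and follows the paper's route. For the covering claim, $q$ stays deficient throughout the scan, so the loop for $p$ could not end with $\mathsf{FindOut}(p,C_p)=\bot$ and must end with $p$ saturated. The size bound comes from double-counting witness incidences against $|W(q)|\le h$. You also spell out two invariants the paper leaves implicit: loads are monotone during the scan, and $|W(q)|\le h$ holds at all times.
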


\begin{proof}
Fix $q\in C_{\mathrm{def}}$ at the end of the scan and
$p\in N(q)\setminus W(q)$.  Witness sets only grow during the scan, and a
column leaves $C_{\mathrm{def}}$ as soon as it obtains $h$ witnesses.  Hence
$q$ remained deficient throughout the scan.  When $p$ was processed,
$q\in C_p$.  Since $(p,q)\in E$, the loop could not have stopped with
$\mathsf{FindOut}(p,C_p)=\bot$ while $p$ was unsaturated.  It therefore ended
with $\lambda(p)=\tau$, so $p\in R_{\mathrm{sat}}$.  This proves the
containment.

Immediately before loads are released, every regular column has at most $h$
witnesses.  Double-counting witness incidences gives
\[
    \tau|R_{\mathrm{sat}}|
    \le\sum_{p\in\Aact}\lambda(p)
    =\sum_{\substack{q\in\Bact\\q\text{ regular}}}|W(q)|
    \le bh.
\]
The stored witnesses and all neighbors found by listing the saturated rows
therefore contain every member of $N(q)$, and every stored pair is a genuine
arc.  Thus $\mathsf{Adj}(q)=N(q)$ exactly.
\end{proof}

\begin{lemma}\label{lem:capacity-phase-safety}
Immediately after a repair, every active regular column has exactly $h$
distinct active witnesses.  After $c<h$ row deletions in the current phase,
it has at least $h-c>0$ surviving witnesses.  Hence a regular column cannot
become a source without being made explicit and checked by the repair
triggered by the $h$-th deletion.
\end{lemma}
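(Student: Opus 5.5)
The plan is to prove the three assertions of \cref{lem:capacity-phase-safety} in order, tracking witness sets through a repair and then through the subsequent row deletions.

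\emph{Step 1: exactly $h$ witnesses after a repair.} Fix a column $q$ that is regular and active immediately after \cref{alg:capacity-repair} returns. At the end of the row scan, every column still in $C_{\mathrm{def}}$ is marked explicit. Hence $q\notin C_{\mathrm{def}}$ at that time. Either $q$ was removed from $C_{\mathrm{def}}$ during the scan, at the moment $|W(q)|=h$ (after which it is no longer a candidate and its witness set stops growing), or $q$ was never deficient.

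In the second case, $|W(q)|\ge h$ at the start of the repair. I will show $|W(q)|\le h$ always holds, so $|W(q)|=h$. The argument is an induction over the whole execution: witnesses are added only during a scan, only to deficient columns, and the column is removed from $C_{\mathrm{def}}$ exactly when its count reaches $h$. Outside repairs, $W(q)$ only shrinks.

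The witnesses are distinct rows, because $C_p$ excludes columns with $p\in W(q)$. They are genuine in-neighbors, because $\mathsf{FindOut}$ returns an arc $(p,q)$. They are active, because $\mathsf{DeleteRow}$ removes a deleted row from every regular $W(q)$.

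For $q$ regular at the start of a repair, I also need that $W(q)\subseteq N(q)$ holds at that start. Rows only leave $\Aact$ by $\mathsf{DeleteRow}$, which purges them from every witness set, and arcs never disappear. So the invariant $W(q)\subseteq N(q)$ is preserved between repairs.

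\emph{Step 2: the decay bound.} After the repair, the counter is reset, $c\gets0$. Each subsequent $\mathsf{DeleteRow}(p)$ removes at most one element from $W(q)$, namely $p$ itself, since the witnesses are distinct. Column deactivations do not touch other columns' witness sets, and a deactivated column is no longer active. No witnesses are added between repairs.

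Thus after $c$ row deletions we have $|W(q)|\ge h-c$. The remaining witnesses are still active rows in $N(q)$ by the invariant above. For $c<h$ this gives at least one active in-neighbor, so $q$ is not a source.

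\emph{Step 3: no source is missed.} When $c$ reaches $h$, $\mathsf{DeleteRow}$ immediately calls $\mathsf{RepairColumns}$. A regular column $q$ that has now lost its witnesses is either still regular with $|W(q)|<h$, and so enters $C_{\mathrm{def}}$, or it regains $h$ witnesses during the scan.

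In the first case, \cref{lem:saturation-certificate} gives $\mathsf{Adj}(q)=N(q)$ exactly, so $q$ is emitted if $N(q)=\emptyset$. In the second case it has positive indegree. Hence a regular column can become a source only at the $h$-th deletion of the phase, and it is then made explicit and checked.

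The only real subtlety is the upper bound $|W(q)|\le h$, which handles non-deficient columns. I would state and prove it as the small invariant described in Step 1, before carrying out the three steps.
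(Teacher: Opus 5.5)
Your proposal is correct and follows the paper's argument. You show that every column leaving a repair regular has exactly $h$ witnesses, that each row deletion removes at most one of them, and that \cref{lem:saturation-certificate} handles columns still deficient at the $h$-th deletion; beyond the paper, you also prove the invariants $|W(q)|\le h$ and $W(q)\subseteq N(q)$, which the paper uses without comment and which are needed for columns that were never deficient.
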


\begin{proof}
After the row scan, every regular column outside $C_{\mathrm{def}}$ has
exactly $h$ witnesses, while every column remaining in $C_{\mathrm{def}}$
becomes explicit.  One row deletion removes at most one distinct witness from
a fixed column.  The lower bound follows.  The $h$-th deletion triggers a
repair before the update returns, and \cref{lem:saturation-certificate} makes
every still-deficient column explicit with its exact neighborhood.
\end{proof}

\begin{theorem}
\label{thm:det-zero-column-general}
For $1\le h\le a$ and $\tau\ge1$, the capacity-saturation structure emits
exactly all active columns of indegree zero under every sequence of row
deletions and column deactivations.  Its total number of graph queries is
\[
    O\left(
        \left(
            a+\frac{a^2}{h}+a\tau+bh+\frac{ab}{\tau}
            +\frac{b^2h}{\tau}
        \right)\log b
    \right).
\]
\end{theorem}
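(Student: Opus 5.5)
Correctness follows from \cref{lem:saturation-certificate,lem:capacity-phase-safety}. An explicit column is emitted exactly when its stored neighborhood, which equals $N(q)$ by the certificate and is then maintained exactly under row deletions, becomes empty. A regular column keeps a surviving witness until the next repair, so it is never a source while regular. Since active sets only shrink, no emitted column regains an in-neighbor. For the query bound I would charge every query to one of three kinds of events: a successful $\mathsf{FindOut}$ in the row scan (a witness assignment), a failed $\mathsf{FindOut}$ ending a row's scan loop, or a listing query in $\mathsf{ListOut}(p,C_{\mathrm{def}})$ for $p\in R_{\mathrm{sat}}$. Each is $O(\log b)$ by \cref{lem:primitive-costs}.

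\emph{Failed searches.} Each repair contributes at most one negative search per active row. There are $O(1+a/h)$ repairs, and the $j$-th repair has at most $a-(j-1)h$ active rows. As in \cref{thm:zero-column}, this totals $O(a+a^2/h)$.

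\emph{Saturated listing.} By \cref{lem:saturation-certificate}, each repair lists from at most $bh/\tau$ saturated rows. Each row costs one final negative query plus one query per discovered arc. Every column leaves the regular state at most once, and when it does its neighborhood is found, so discovered arcs are charged to distinct pairs $(p,q)$ with $p\in R_{\mathrm{sat}}$ at that repair. I would bound these naively by $|R_{\mathrm{sat}}|$ per deficient column, summing to $b\cdot bh/\tau$. The final negative queries total $(1+a/h)\cdot bh/\tau=O(bh/\tau+ab/\tau)$. Together these give the terms $ab/\tau$ and $b^2h/\tau$; the first is absorbed into $bh$ when $\tau\ge a$.

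\emph{Witness assignments.} This is the main obstacle. Each assignment adds one witness incidence, so the count is the number of initial incidences plus the number of replacements of destroyed incidences. Incidences are destroyed in three ways. A row deletion destroys at most $\lambda(p)\le\tau$ of them, for $a\tau$ in total. A column deactivation destroys at most $h$, and a column turning explicit releases at most $h$; each column does each at most once, for $O(bh)$ in total. The initial repair creates at most $bh$ incidences. Since the number of live incidences is always at most $bh$, the total number of assignments is at most $bh+a\tau+2bh$. Strictly, incidences released by deactivation or explicit conversion need not be replaced, so charging assignments to prior destructions in this way is valid: live incidences never exceed $bh$, and every assignment either raises the live count or follows a destruction. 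The delicate point is to confirm that an assignment never goes to a column already holding $p$; this is guaranteed by the $C_p$ filter.

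Summing the three contributions and multiplying by $\log b$ gives the stated bound.
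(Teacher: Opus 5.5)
Your proposal is correct and follows the paper's proof essentially step for step. The steps match: correctness from the two lemmas, $O(a+a^2/h)$ terminal negative queries in the row scans, a conservation-of-incidences count of witness assignments (your $3bh+a\tau$ versus the paper's $a\tau+2bh$), and the certificate $|R_{\mathrm{sat}}|\le bh/\tau$ for the saturated listing, where you charge positive discoveries as $|R_{\mathrm{sat}}|$ per newly explicit column instead of the paper's $|N(q)|\le h+bh/\tau$, with the same $b^2h/\tau$ outcome (your remark about absorbing $ab/\tau$ into $bh$ is unnecessary, since $ab/\tau$ already appears in the stated bound).
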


\begin{proof}
Correctness follows from \cref{lem:capacity-phase-safety} for regular columns
and exact maintenance of $\mathsf{Adj}(q)$ for explicit columns.  If the
execution stops after fewer than $h$ deletions in the current phase, every
regular column still has a witness.  Explicit columns are emitted exactly
when their stored neighborhoods become empty.

There are at most $1+\lfloor a/h\rfloor$ repairs.  If $a_j$ is the number of
active rows in repair $j$, summing over the deletion phases gives
$\sum_j a_j=O(a+a^2/h)$.  Each scanned row contributes at most one terminal
negative query, and each successful witness assignment costs $O(\log b)$
queries.

Let $J$ be the total number of successful witness assignments.  An incidence
can disappear either when its row is deleted or when its regular column is
deactivated or becomes explicit.  A deleted row has load at most $\tau$, so
row deletions remove at most $a\tau$ incidences in total.  Each column leaves
the regular state at most once and then releases at most $h$ incidences; at
most $bh$ incidences remain at the end.  Conservation of incidences therefore
gives $J\le a\tau+2bh$.

In one repair, \cref{lem:saturation-certificate} gives
$|R_{\mathrm{sat}}|\le bh/\tau$.  The terminal negative queries made while
listing saturated rows over all repairs are consequently
\[
    O\left(\left(1+\frac{a}{h}\right)\frac{bh}{\tau}\right)
    =O\left(\frac{bh}{\tau}+\frac{ab}{\tau}\right).
\]
Every column becomes explicit at most once.  At that transition, the same
certificate gives
\[
    |N(q)|\le |W(q)|+|R_{\mathrm{sat}}|
    \le h+\frac{bh}{\tau}.
\]
Charging every positive discovery to the column becoming explicit yields at
most $bh+b^2h/\tau$ such discoveries over the entire execution.  Combining
the row scans, witness assignments, terminal queries, and positive
discoveries, and upper-bounding every term by its product with $O(\log b)$,
proves the theorem.
\end{proof}

\begin{corollary}
\label{cor:det-zero-column}
For $a,b\ge1$, deterministic zero-column maintenance uses
\[
    O\left(
        \left(
            a+b+
            \begin{cases}
                b\sqrt a, & a<b^{2/3},\\
                ab^{2/3}, & a\ge b^{2/3}
            \end{cases}
        \right)\log b
    \right)
\]
graph queries.
Equivalently, the bound is
\[
    O\left(
        \left(a+b+\max\Set{b\sqrt a,ab^{2/3}}\right)\log b
    \right).
\]
\end{corollary}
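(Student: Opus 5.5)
We instantiate \cref{thm:det-zero-column-general} with explicit choices of $h\in[1,a]$ and $\tau\ge1$ and balance the terms inside
\[
    F(h,\tau)\coloneqq a+\frac{a^2}{h}+a\tau+bh+\frac{ab}{\tau}+\frac{b^2h}{\tau}.
\]
First we fix $\tau$ to balance $a\tau$ against the $\tau$-dependent terms $ab/\tau+b^2h/\tau$. Taking $\tau\coloneqq\lceil\sqrt{b(a+bh)/a}\,\rceil$ makes these three terms $O\bigl(a+\sqrt{ab(a+bh)}\bigr)=O\bigl(a+\sqrt{a^2b}+b\sqrt{ah}\bigr)$. The summand $a\sqrt b$ is at most $\max\Set{b\sqrt a,ab^{2/3}}$: if $a\ge b^{2/3}$ then $a\sqrt b\le ab^{2/3}$, and if $a<b^{2/3}$ then $a\sqrt b=\sqrt a\cdot\sqrt{ab}\le\sqrt a\cdot b^{5/6}\le b\sqrt a$. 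The ceiling adds only $O(a)$. It then remains to choose $h$ to minimize $a^2/h+bh+b\sqrt{ah}$.

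We split into cases. If $a\ge b^{2/3}$, we take $h\coloneqq\lceil a/b^{2/3}\rceil$, which lies in $[1,a]$. Then $a^2/h\le ab^{2/3}$, and $bh\le ab^{1/3}+b$. Also $b\sqrt{ah}=O\bigl(b\sqrt{a^2/b^{2/3}}+b\sqrt a\bigr)=O\bigl(ab^{2/3}+b\sqrt a\bigr)$, and $b\sqrt a\le ab^{2/3}$ exactly when $\sqrt a\ge b^{1/3}$, which is this case. If $a<b^{2/3}$, we take $h\coloneqq1$. Then $a^2\le a\cdot b^{2/3}$, but we need this to be at most $b\sqrt a$; indeed $a^{3/2}<b$, so $a^2<b\sqrt a$. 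The remaining terms are $bh=b$ and $b\sqrt a$. In both cases $F=O\bigl(a+b+\max\Set{b\sqrt a,ab^{2/3}}\bigr)$, and multiplying by the $\log b$ factor from \cref{thm:det-zero-column-general} gives the claimed bound.

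The equivalence of the two displayed forms follows from the same case comparison: $b\sqrt a\ge ab^{2/3}$ if and only if $a\le b^{2/3}$, so the case expression equals the maximum. The only delicate point is the rounding of $h$ and $\tau$ together with the constraint $h\le a$. We expect the main obstacle to be the bookkeeping: checking that each ceiling only adds lower-order terms such as $b$ or $a$, and that $h\le a$ holds because $a/b^{2/3}\le a$.
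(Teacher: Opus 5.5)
Your proof is correct and takes essentially the paper's route: you instantiate \cref{thm:det-zero-column-general} with $\tau$ depending on $h$, then split at $a$ versus $b^{2/3}$ with $h=1$ or $h=\lceil a/b^{2/3}\rceil$. The only difference is that the paper takes $\tau=\lceil b\sqrt{h/a}\,\rceil$ and uses $h\le a$ to absorb $ab/\tau$ into $a^2/h$, which avoids the extra $a\sqrt b$ term that you instead absorb into the maximum through your case comparison.
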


\begin{proof}
For fixed $h$, choose $\tau\coloneqq\lceil b\sqrt{h/a}\rceil$.  The bound
in \cref{thm:det-zero-column-general} becomes
\[
    O\left(
        \left(a+\frac{a^2}{h}+b\sqrt{ah}+bh\right)\log b
    \right).
\]
Indeed, the terms involving $\tau$ satisfy
$a\tau=O(b\sqrt{ah}+a)$,
$b^2h/\tau=O(b\sqrt{ah})$, and
$ab/\tau=O(a^2/h)$.

If $a<b^{2/3}$, take $h=1$.  Then $a^2<b\sqrt a$, giving
$O((a+b\sqrt a)\log b)$.  If $a\ge b^{2/3}$, take
$h=\lceil a/b^{2/3}\rceil$.  Here $1\le h\le a$, and every nonconstant term
in the preceding bound is $O(ab^{2/3})$.  Adding the harmless $b$ term gives
the displayed uniform statement.
\end{proof}

\subsection{Putting the deterministic structures together}
\label{sec:deterministic-combination}

Run deterministic zero-row maintenance and deterministic zero-column
maintenance on the same directed bipartite graph $G$, using the shared active
sets and source queue described in \cref{sec:source-peeling}.  The two
structures are exact, so the queue performs deterministic source peeling.

\begin{theorem}\label{thm:deterministic-graph}
For $a,b\ge1$, one can find a chordless directed cycle in $G$ or correctly
report that $G$ is acyclic using
\[
    O\left(
        \left(a+b+\max\Set{b\sqrt a,ab^{2/3}}\right)\log b
    \right)
\]
graph queries deterministically.
\end{theorem}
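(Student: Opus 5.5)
The plan is to combine \cref{thm:det-zero-row} and \cref{cor:det-zero-column} through the common source-peeling driver of \cref{sec:deterministic}, and then finish with the cycle recovery of \cref{rem:recover-cycle,lem:cycle-shortening}.

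\emph{Correctness.} Both structures are exact under every (hence every adaptively generated) sequence of updates. The driver deletes a queued vertex only if it is still active. Each deletion is reported as a row deactivation and row deletion, or as a column deletion and column deactivation, exactly as at the end of \cref{sec:source-peeling}. I will argue by induction over driver steps that the active graph always equals the graph produced by some legal source-peeling sequence.

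The inductive step has one subtle point, since emissions made inside an operation are processed only after it returns. A vertex that was emitted had indegree zero at emission time, and active sets only shrink afterwards. So when the vertex is later extracted it is still a source, and delaying its deletion is harmless.

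Termination of the queue then means that no active vertex has indegree zero. For rows this follows from the exactness statement of \cref{thm:det-zero-row}. For columns it follows from \cref{thm:det-zero-column-general}: regular columns keep a surviving witness by \cref{lem:capacity-phase-safety}, and explicit columns are emitted as soon as their neighborhood becomes empty. \cref{lem:source-peeling} now yields one of two outcomes. Either every vertex is deleted and $G$ is acyclic, or the remaining active graph is a nonempty source-free graph.

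\emph{Query count.} By \cref{thm:det-zero-row}, the row structure costs $O((a+b+a\sqrt b)\log b)$. Since $a\sqrt b\le ab^{2/3}$ for $b\ge1$, this is within $O((a+b+\max\{b\sqrt a,ab^{2/3}\})\log b)$. The column structure costs exactly the bound of \cref{cor:det-zero-column}.

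When a nonempty source-free graph remains, \cref{rem:recover-cycle} finds a directed cycle with $O((a+b)\log b)$ queries. \cref{lem:cycle-shortening} then turns it into a chordless cycle with another $O((a+b)\log b)$ queries. Summing the three contributions gives the claimed bound. The hypothesis $a,b\ge1$ is needed only so that the two earlier bounds apply as stated. The degenerate cases with an empty side are acyclic and cost nothing.

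\emph{Main obstacle.} I expect the main difficulty to be the interleaving argument, not the arithmetic. I would first check that the column structure's $\mathsf{DeleteRow}$ is allowed to trigger $\mathsf{RepairColumns}$ while other emitted vertices are still waiting in the queue. This should be fine: the proofs of \cref{lem:saturation-certificate,lem:capacity-phase-safety} depend only on the current active sets, not on whether those sets are already source-free. I would also confirm that discarding duplicate emissions does not break the load bookkeeping. A deactivation is performed only once per vertex, so each load decrement happens once.
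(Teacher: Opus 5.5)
Your proposal is correct and follows essentially the same route as the paper: you run the two exact deterministic structures through the shared source queue, apply \cref{lem:source-peeling}, recover a chordless cycle via \cref{rem:recover-cycle,lem:cycle-shortening}, and absorb all costs into the displayed bound. Two small differences are improvements rather than deviations: you absorb the zero-row term directly via $a\sqrt b\le ab^{2/3}$ for all $b\ge1$, which avoids the paper's case split on $a$ versus $b^{2/3}$, and you spell out why delayed emissions are still sources when extracted, which the paper states in one line.
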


\begin{proof}
By \cref{thm:det-zero-row,thm:det-zero-column-general}, every emitted vertex
is a current source, and every current source is eventually emitted.  Thus the
shared queue implements the source-peeling process.  If every vertex is
removed, \cref{lem:source-peeling} certifies acyclicity.  Otherwise the
remaining graph is nonempty and source-free, and
\cref{rem:recover-cycle,lem:cycle-shortening} returns a chordless directed
cycle.

The zero-row structure uses $O((a+b+a\sqrt b)\log b)$ queries.  The optimized
zero-column bound is given by \cref{cor:det-zero-column}.  If
$a<b^{2/3}$, then $a\le b$ and $a\sqrt b\le b\sqrt a$.  If
$a\ge b^{2/3}$, then $a\sqrt b\le ab^{2/3}$.  Thus the zero-row cost and the
$O((a+b)\log b)$ cycle-recovery cost are absorbed by the displayed bound.
\end{proof}

\begin{corollary}
\label{cor:deterministic-matroid}
Another Matroid Intersection can be solved deterministically using
\[
    O\left(
        \left(
            n+\max\Set{r\sqrt{n-r},(n-r)r^{2/3}}
        \right)\log r
    \right)
    =\widetilde O(nr^{2/3})
\]
independence queries.
\end{corollary}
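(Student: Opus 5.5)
The plan is to instantiate \cref{thm:deterministic-graph} on the exchange graph, in the same way \cref{cor:randomized-matroid} was obtained from \cref{thm:randomized-graph}.

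First I dispose of the degenerate cases. If $r=0$ or $n-r=0$, then one side of $D_S$ is empty. The graph is therefore acyclic, and \cref{thm:cycle-criterion} shows that $S$ is unique with zero queries. Otherwise $a=n-r\ge1$ and $b=r\ge1$, and \cref{rem:exchange-query-model} identifies $D_S$ with an instance of the directed bipartite query model in which every graph query costs exactly one independence query. \cref{thm:deterministic-graph} then yields, deterministically, either a certificate that $D_S$ is acyclic or a chordless directed cycle $C$. It uses
\[
    O\bigl((a+b+\max\Set{b\sqrt a,ab^{2/3}})\log b\bigr)
\]
queries. In the first case \cref{thm:cycle-criterion} says no other maximum common independent set exists. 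In the second case \cref{cor:cycle-output} outputs $S\mathbin{\triangle}V(C)$ without further queries.

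It remains to substitute $a=n-r$ and $b=r$. This gives $a+b=n$, so the bound becomes $O\bigl((n+\max\Set{r\sqrt{n-r},(n-r)r^{2/3}})\log r\bigr)$, which is the first displayed expression. Here $\log b$ is read as $\log\max\{r,2\}$, following the paper's convention.

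For the $\widetilde O(nr^{2/3})$ form I bound each term separately. Clearly $(n-r)r^{2/3}\le nr^{2/3}$ and $n\le nr^{2/3}$. For $r\sqrt{n-r}$, I use $r\le n$ to write $r\sqrt{n-r}\le r\sqrt n=\sqrt r\cdot\sqrt{rn}\le\sqrt r\cdot n\le nr^{2/3}$. Finally, the factor $\log r$ is $O(\log n)$ and is absorbed by $\widetilde O$.

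There is no real obstacle here. The only points needing care are the degenerate cases, since \cref{thm:deterministic-graph} assumes $a,b\ge1$, and checking that the $b\sqrt a$ branch is dominated.
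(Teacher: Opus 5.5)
Your proposal is correct and follows the paper's proof: you dispose of $r=0$ or $n-r=0$, apply \cref{thm:deterministic-graph} with $a=n-r$ and $b=r$ via \cref{rem:exchange-query-model}, and recover the other solution via \cref{cor:cycle-output}. The only difference is cosmetic: you bound $r\sqrt{n-r}\le n\sqrt r\le nr^{2/3}$ directly, whereas the paper splits on whether $n-r<r^{2/3}$ and, in that case, uses $r\sqrt{n-r}<r^{4/3}\le nr^{2/3}$.
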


\begin{proof}
If $r=0$ or $n-r=0$, the given solution is unique.  Otherwise apply
\cref{thm:deterministic-graph} with $a=n-r$ and $b=r$.  By
\cref{rem:exchange-query-model}, every graph query is one independence query,
and a returned chordless cycle gives another maximum common independent set
by \cref{cor:cycle-output}.

For the uniform bound, if $n-r<r^{2/3}$, then
$r\sqrt{n-r}<r^{4/3}\le nr^{2/3}$.  In the other case,
$(n-r)r^{2/3}\le nr^{2/3}$.
\end{proof}

\begin{proof}[Proof of \cref{thm:main}]
If $r=0$ or $n-r=0$, the given solution is unique.  Otherwise the randomized
claim follows from \cref{cor:randomized-matroid}, and the deterministic claim
follows from \cref{cor:deterministic-matroid}.
\end{proof}

%----------------------------------------------------------------

\section{Enumerating maximum common independent sets}
\label{sec:enumeration}

Kobayashi, Kurita, and Wasa~\cite{KKW23} gave a polynomial-delay,
polynomial-space algorithm for enumerating large maximal common independent
sets; in particular, their framework enumerates all maximum common independent
sets by recursively imposing inclusion and exclusion constraints.  We use the
same binary partition.  Since every subproblem reached below already carries a
maximum solution, however, it can be tested by asking for another solution
rather than solving a fresh Matroid Intersection instance.  Together with the
alternative-output scheduling of Uno~\cite{Uno03}, this ensures that
backtracking itself makes no oracle query and improves the independence-query
delay for maximum solutions.  The reduction from another-solution queries to
enumeration and the alternative-output scheduling are standard; the new
ingredient used here is the faster another-solution algorithm developed in the
preceding sections.

Let $\mathcal A(n,r)$ be the worst-case number of independence queries used by
an algorithm for Another Matroid Intersection on an instance with at most $n$
elements and maximum solution size at most $r$, including a call that reports
that no other solution exists.  The \emph{query delay} of a deterministic
enumeration algorithm is the maximum number of independence queries made before
the first output, between two consecutive outputs, or between the last output
and termination.  For a Las Vegas algorithm, \emph{expected query delay} means
the maximum, over the corresponding output histories, of the conditional
expected number of queries in such an interval.

\begin{theorem}
\label{thm:enumeration}
Suppose that one maximum common independent set is given and that Another
Matroid Intersection can be solved in $\mathcal A(n,r)$ independence queries
and polynomial space.  Then all maximum common independent sets can be
enumerated without repetition, in polynomial space and with query delay
$O(\mathcal A(n,r))$.  More precisely, at most two another-solution calls are
made between consecutive outputs and between the last output and termination.
If there are $L$ solutions, exactly $2L-1$ another-solution calls are made.
\end{theorem}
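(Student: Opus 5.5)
We enumerate with a binary partition tree in the style of \cite{KKW23}. Each node of the tree is a pair $(I,X)$ of disjoint sets with $I\subseteq X$ allowed elements and forced inclusions: say $I$ is the set of forced elements and $F$ the set of forbidden elements. Each node carries a maximum common independent set $S'$ of the original pair, with $I\subseteq S'$ and $S'\cap F=\emptyset$. The node stands for the family of maximum common independent sets $T$ with $I\subseteq T$ and $T\cap F=\emptyset$.

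This family is exactly the set of maximum common independent sets of a reduced instance: contract $I$ in both matroids and delete $F\cup I$ from the ground set. Every member has size $|S'|$, so its image $S'\setminus I$ is maximum in the reduced instance. Independence queries to the contraction $\M_i/I$ are answered by one query to $\M_i$ on the set with $I$ added. Hence a reduced instance has at most $n$ elements, rank at most $r$, and the same query cost. Calling the another-solution algorithm on $(S'\setminus I)$ either certifies that $S'$ is the unique member of the node, or returns some $T'\ne S'$ in the node.

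In the latter case, pick $e\in S'\triangle T'$. We branch into the child with $e$ forced in and the child with $e$ forbidden. One child contains $S'$ and the other contains $T'$ (or its counterpart), so both children are nonempty and each carries a known maximum solution. The families of the two children partition the family of the parent, so there is no repetition, and recursion covers all solutions. Depth is at most $n$, and each frame stores $O(n)$ data, so space is polynomial given polynomial-space calls.

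For the counting, the tree is a full binary tree whose leaves are exactly the $L$ solutions. It therefore has $L-1$ internal nodes, so $2L-1$ nodes in total. Each node makes exactly one another-solution call: positive at internal nodes and negative at leaves. This gives exactly $2L-1$ calls.

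The delay needs more care, and this is the main obstacle. A naive DFS that outputs at leaves can walk down a long chain of internal nodes before the next output, giving $\Theta(n)$ calls between outputs. We apply Uno's alternative-output scheduling \cite{Uno03}.
\begin{itemize}
\item Every node carries a distinct solution if we assign to each internal node the solution $S'$ it inherited and to its new child the fresh $T'$. Concretely, give each node the new solution discovered at its creation (the root gets the input $S$).
\item A node at even depth outputs its solution on entry; a node at odd depth outputs it on exit.
\item Outputs then correspond bijectively to nodes, and backtracking itself makes no oracle query since the solutions are already stored.
\end{itemize}
With this scheduling, any two consecutive outputs, and the interval after the last output, are separated by at most two node visits, hence at most two calls. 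The argument is the standard one: between two consecutive outputs the DFS traverses at most a parent–child step pair, because of the parity alternation. I would verify this carefully, including the case before the first output and after the last, where the root outputs immediately.

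Before the first output there is no call, since the root outputs $S$ on entry. For the Las Vegas version each call has expected cost $O(\mathcal A)$ conditioned on the history, because of fresh randomness. The delay bound $O(\mathcal A(n,r))$ then follows in both settings.
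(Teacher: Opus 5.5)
Your partition argument, the contraction/deletion reduction, and the count are correct. Your tree is in fact the paper's: every family of the binary partition receives exactly one another-solution call, positive at the $L-1$ internal nodes and negative at the $L$ leaves, which gives $2L-1$ calls.

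\textbf{The gap is the delay argument.} Your binary tree has $2L-1$ nodes but only $L$ solutions. So ``every node carries a distinct solution'' and ``outputs correspond bijectively to nodes'' cannot hold there: the child that inherits $S'$ is created with no new solution. Under the literal reading, where inherited children output nothing, the parity rule does not bound the delay. Take $\M_1$ the partition matroid with parts $\{x_i,y_i\}$, and $\M_2$ the partition matroid allowing each $x_i$ and at most one of $y_1,\dots,y_k$. The solutions are then $S=\{x_1,\dots,x_k\}$ and the sets $S-x_i+y_i$. Every split separates one $S-x_i+y_i$ from the rest, so the inherited children form a chain of length $k$.
\begin{itemize}
\item If the DFS enters inherited children first, at least $k+1=\Theta(n)$ calls precede the second output.
\item If it enters fresh children first (with $k\ge3$), consider three consecutive calls: the negative call of the fresh leaf at depth $2$ (which output on entry), then the positive call of its inherited sibling, then the negative call of the new fresh leaf at depth $3$ (which outputs on exit). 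These are three calls between consecutive outputs.
\end{itemize}
If you instead meant the tree whose nodes are solutions (each inherited child merged into its parent), then a node with $j$ children makes $j+1$ calls. In that case ``at most two node visits, hence at most two calls'' is not a valid inference, and ``one call per node'' no longer describes that tree. You yourself flagged this step as still to be verified; as written it is not established.

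Either of two repairs works.

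\emph{Merged tree.} Let each node repeatedly call Another on its current family. Each positive answer creates a child that is entered at once, and the final negative answer is immediately followed by leaving the node. Apply the parity rule to depth in this tree. Then check the cases: after an even node's entry output, and after an odd node's exit output (passing to its parent and possibly its grandparent). This gives at most two calls between consecutive outputs, one call after the last output, and $(L-1)+L=2L-1$ calls in total.

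\emph{Paper's route, a look-ahead with no parity.} Each recursive call holds two already-output solutions $X,Y$ and splits on some $e\in X\setminus Y$. It immediately makes the calls for both children. It outputs the include-child's answer $Z_1$ at once, and stores the exclude-child's answer $Z_0$ until the first child's subtree is exhausted. Backtracking then makes no queries, and every output, as well as termination, is preceded by at most two calls since the previous output.
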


\paragraph{Constrained subproblems.}

Let $\mathit{In},\mathit{Ex}\subseteq V$ be disjoint sets.  The associated
solution family is
\[
    \mathcal F(\mathit{In},\mathit{Ex})
    \coloneqq
    \Set{X\in\I_1\cap\I_2\mid
        |X|=r,\ \mathit{In}\subseteq X,\
        X\cap\mathit{Ex}=\emptyset}.
\]
Whenever this subproblem is processed, the algorithm stores a known solution
$X\in\mathcal F(\mathit{In},\mathit{Ex})$.  The call
$\mathsf{Another}(\mathit{In},\mathit{Ex},X)$ asks for a different member of
this family and returns $\bot$ if none exists.

This constrained call is an ordinary Another Matroid Intersection instance
after contracting $\mathit{In}$ and restricting the ground set to exclude
$\mathit{Ex}$; equivalently, it uses the contracted and deleted matroids
\[
    (\M_i/\mathit{In})\setminus\mathit{Ex},
    \qquad i\in\Set{1,2},
\]
whose common ground set is
$V'=V\setminus(\mathit{In}\cup\mathit{Ex})$.  Because $\mathit{In}$ is a
subset of the stored common independent set $X$, an independence query for a
set $J\subseteq V'$ is implemented by the original query
\[
    J\text{ is independent in }
        (\M_i/\mathit{In})\setminus\mathit{Ex}
    \quad\Longleftrightarrow\quad
    J\cup\mathit{In}\in\I_i.
\]
The displayed solution is $X\setminus\mathit{In}$, and every returned solution
is lifted by adjoining $\mathit{In}$.  Thus one derived independence query
costs one original query.  This standard reduction of inclusion and exclusion
constraints to contraction and restriction (or deletion) is also used
in~\cite[Proposition~7]{KKW23}.

The enumeration procedure is given in \cref{alg:enumeration}.  In a call to
$\mathsf{RecEnumerate}$, its two displayed solutions $X$ and $Y$ have already
been output.  The solution $Z_0$ for the child excluding $e$ is found before
the child including $e$ is entered; it is stored without being output until
that first child has been exhausted.

\begin{algorithm}[H]
\caption{Enumeration with two-call worst-case delay}
\label{alg:enumeration}

\Procedure{\Enumerate{$X$}}{
    \textbf{output} $X$\;
    $Y\gets\Another(\emptyset,\emptyset,X)$\;
    \If{$Y\ne\bot$}{
        \textbf{output} $Y$\;
        $\RecEnumerate(\emptyset,\emptyset,X,Y)$\;
    }
}

\Procedure{\RecEnumerate{$\mathit{In},\mathit{Ex},X,Y$}}{
    choose $e\in X\setminus Y$\;
    $Z_0\gets\Another(\mathit{In},\mathit{Ex}+e,Y)$
        \Comment*[r]{look ahead; do not output $Z_0$ yet}
    $Z_1\gets\Another(\mathit{In}+e,\mathit{Ex},X)$\;
    \If{$Z_1\ne\bot$}{
        \textbf{output} $Z_1$\;
        $\RecEnumerate(\mathit{In}+e,\mathit{Ex},X,Z_1)$\;
    }
    \If{$Z_0\ne\bot$}{
        \textbf{output} $Z_0$\;
        $\RecEnumerate(\mathit{In},\mathit{Ex}+e,Y,Z_0)$\;
    }
}
\end{algorithm}

\begin{proof}[Proof of \cref{thm:enumeration}]
Consider a call
$\mathsf{RecEnumerate}(\mathit{In},\mathit{Ex},X,Y)$.  Since $X\ne Y$, an
element $e\in X\setminus Y$ exists.  Moreover,
\[
    X\in\mathcal F(\mathit{In}+e,\mathit{Ex}),
    \qquad
    Y\in\mathcal F(\mathit{In},\mathit{Ex}+e),
\]
and the two child families form the disjoint partition
\[
    \mathcal F(\mathit{In},\mathit{Ex})
    =\mathcal F(\mathit{In}+e,\mathit{Ex})
     \mathbin{\dot\cup}
     \mathcal F(\mathit{In},\mathit{Ex}+e).
\]
Thus both child calls have the stored solution required by
$\mathsf{Another}$.  Induction on the size of the solution family now shows
that \cref{alg:enumeration} outputs every maximum common independent set
exactly once.

Immediately after $Y$ is output, the recursive procedure makes exactly two
another-solution calls: it first tests the child that excludes $e$, and then
the child that includes $e$.  If $Z_1\ne\bot$, it is the next output.  If
$Z_1=\bot$ but $Z_0\ne\bot$, the stored solution $Z_0$ is the next output.
If both answers are $\bot$, the procedure returns.  After the recursion in the
first child finishes, $Z_0$ has already been computed, so the algorithm either
outputs it without another query or returns to its parent.  Consequently, a
chain of backtracking steps makes no queries: every two another-solution calls
produce a new output or certify that the remaining search is complete.  The
first output is given, and the call producing the second output uses only one
another-solution invocation.  This proves the delay bound, including the delay
after the final output.

Each output other than the initial solution starts exactly one invocation of
$\mathsf{RecEnumerate}$, and each such invocation makes two another-solution
calls.  Together with the initial call, their number is
$1+2(L-1)=2L-1$.  Every split fixes the membership of one previously unfixed
element, so the recursion depth is at most $n$.  Each active level stores only
the two constraint sets, a constant number of solutions, and one deferred
answer.  The space is therefore polynomial.
\end{proof}

\begin{corollary}\label{cor:enumeration}
All maximum common independent sets can be enumerated in polynomial space with
expected query delay $\widetilde O(n\sqrt r)$ by a Las Vegas algorithm, or
with deterministic query delay $\widetilde O(nr^{2/3})$.
\end{corollary}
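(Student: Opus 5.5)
The plan is to instantiate \cref{thm:enumeration} with the two another-solution algorithms of \cref{thm:main}. Enumeration must start from one maximum common independent set. I would first compute one with an existing Matroid Intersection algorithm. Blikstad's algorithms~\cite{Blikstad21} use $\widetilde O(nr^{3/4})$ randomized or $\widetilde O(nr^{5/6})$ deterministic queries. These exceed the claimed delay bounds, so the delay before the first output needs a convention. I would follow \cref{thm:enumeration} and treat $X$ as given, as its hypothesis states. Alternatively, I would charge this preprocessing separately from the delay.

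For the deterministic claim, \cref{cor:deterministic-matroid} gives $\mathcal A(n,r)=\widetilde O(nr^{2/3})$. I would check that this holds uniformly over every constrained subproblem. A subproblem $\mathsf{Another}(\mathit{In},\mathit{Ex},X)$ is an instance on $V'$ with $|V'|\le n$ and solution size $r-|\mathit{In}|\le r$. Each derived query costs one original query. The bound is monotone in $n$ and $r$, so the cost is at most $\widetilde O(nr^{2/3})$. The degenerate cases $r'=0$ or $n'=r'$ cost no queries. The algorithm also uses polynomial space, since it stores only the active sets, witnesses and adjacency lists. \cref{thm:enumeration} then gives deterministic delay $O(\mathcal A(n,r))$.

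For the Las Vegas claim, \cref{thm:enumeration} is stated for a worst-case bound $\mathcal A$, so I would re-run its delay argument in expectation. This is the only step needing care. In any interval between outputs, the algorithm makes at most two another-solution calls. Each call uses fresh randomness. Conditioned on the entire past output history, each call is a Las Vegas run on a fixed instance, so its conditional expected cost is $\widetilde O(n\sqrt r)$ by \cref{cor:randomized-matroid}.

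One subtlety remains. A call's answer may be deferred: $Z_0$ is computed in one interval but output later. The algorithm is Las Vegas, so every answer is correct regardless of the randomness. The sequence of outputs and the recursion structure are therefore deterministic functions of the instance, provided the choice of the returned solution does not matter. It does matter, because different runs can return different $T$. I would handle this by conditioning on the output history. The calls in the current interval use randomness independent of the history, apart from the returned value. Linearity of expectation over the at most two calls then bounds the conditional expected query count by $O(\widetilde O(n\sqrt r))$.

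Space stays polynomial, as in the deterministic case. This proves both claims.
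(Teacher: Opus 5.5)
Your proposal is correct and matches the paper's proof. You instantiate \cref{thm:enumeration} with \cref{cor:randomized-matroid} or \cref{cor:deterministic-matroid}, check that every constrained instance has at most $n$ elements and solution size at most $r$ (so the bounds $\widetilde O(n\sqrt r)$ and $\widetilde O(nr^{2/3})$ apply uniformly), and rerun the two-calls-per-interval argument conditionally on the output history for the Las Vegas case. Your remarks on the given initial solution, degenerate subinstances and space spell out what the paper leaves implicit. The deferred-$Z_0$ worry is settled most cleanly as follows: the two calls in an interval start after that interval's opening output and use fresh randomness, so their conditional expected cost is bounded given the full past, and hence, by averaging, given the output history. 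Their own returned values never enter the conditioning, so the qualifier ``apart from the returned value'' is unnecessary.
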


\begin{proof}
Apply \cref{thm:enumeration} with the Las Vegas algorithm of
\cref{cor:randomized-matroid} or the deterministic algorithm of
\cref{cor:deterministic-matroid}.  Each constrained instance has at most $n$
elements and maximum solution size at most $r$.  Since every output history
determines the next constrained instance before the corresponding call is
made, the same argument applies conditionally to the expected query bound of
the Las Vegas algorithm.
\end{proof}

%----------------------------------------------------------------

\section*{Acknowledgments}

The author is grateful to Hanna Sumita, Tsubasa Harada, and Sotatsu Moriyama for helpful discussions and valuable comments on the manuscript.

%----------------------------------------------------------------

\section*{Declaration of Generative AI Use}

ChatGPT 5.6 sol was used to help generating the proof details and was also used to draft the manuscript. 
The algorithmic idea was contributed by the author. The author verified and revised the proof and the manuscript and takes full responsibility for the paper.

%----------------------------------------------------------------

\bibliographystyle{alpha}
\bibliography{references}

\appendix

\section{Concentration calculations}\label{app:concentration}

We record the probability calculations used in
\cref{lem:row-classification,lem:column-classification}.  The only
concentration result needed is the following standard form of Hoeffding's
inequality~\cite{Hoeffding63}.  If $X_1,\ldots,X_L$ are independent random
variables taking values in $[0,1]$, and
\[
    \overline X\coloneqq\frac{1}{L}\sum_{i=1}^L X_i,
    \qquad
    \mu\coloneqq\frac{1}{L}\sum_{i=1}^L\mathbb E[X_i],
\]
then, for every $t>0$,
\begin{equation}\label{eq:hoeffding}
    \Pr[\overline X-\mu\ge t]\le e^{-2Lt^2},
    \qquad
    \Pr[\mu-\overline X\ge t]\le e^{-2Lt^2}.
\end{equation}

\subsection{Row classification}
\label{app:row-classification-calculation}

\begin{proof}[Proof of \cref{lem:row-classification}]
Fix the active graph at the beginning of a phase and an implicit row $p$ of
degree $d\coloneqq d(p)$.  For $i\in\Set{1,\ldots,L}$, let $X_i$ be the
indicator that the $i$-th call to $\mathsf{HasIn}$ is positive.  The column
samples used in different experiments are independent, so the variables
$X_1,\ldots,X_L$ are independent Bernoulli random variables with common mean
\[
    \mu=1-\left(1-\frac{1}{4h}\right)^d.
\]

Suppose first that $d<2h$.  A union bound over the $d$ in-neighbors gives
\[
    \mu\le\frac{d}{4h}<\frac12.
\]
The row is labeled heavy only if $\overline X\ge3/4$.  Hence
\cref{eq:hoeffding} gives
\[
\begin{aligned}
    \Pr[p\text{ is labeled heavy}]
    &\le \exp\left(-2L\left(\frac34-\mu\right)^2\right)\\
    &\le \exp\left(-2L\left(\frac14\right)^2\right)
     =e^{-L/8}.
\end{aligned}
\]

Now suppose that $d\ge16h$.  Using $1-x\le e^{-x}$, we obtain
\[
    1-\mu
    =\left(1-\frac{1}{4h}\right)^d
    \le e^{-d/(4h)}
    \le e^{-4}.
\]
Thus $\mu\ge1-e^{-4}$.  The row is labeled light only if
$\overline X<3/4$, and therefore
\[
\begin{aligned}
    \Pr[p\text{ is labeled light}]
    &\le \exp\left(-2L\left(\mu-\frac34\right)^2\right)\\
    &\le \exp\left(-2L\left(\frac14-e^{-4}\right)^2\right)
     \le e^{-L/10},
\end{aligned}
\]
where the last inequality follows from
$2(\frac14-e^{-4})^2>1/10$.
\end{proof}

\subsection{Collective column classification}
\label{app:column-classification-calculation}

\begin{proof}[Proof of \cref{lem:column-classification}]
If $s=|\Aact|<4h$, every active column has degree less than $4h$ and the
procedure labels every implicit column light.  Thus both implications in the
lemma hold deterministically.  Assume below that $s\ge4h$, and fix an
implicit column $q$.  Let $N\subseteq\Aact$ be its active in-neighborhood and
write $d\coloneqq|N|=d(q)$.

For $i\in\Set{1,\ldots,L}$, let $X_i$ indicate that $q$ is detected in the
$i$-th experiment.  The sets $R$ sampled in different experiments are
independent, so $X_1,\ldots,X_L$ are independent Bernoulli random variables.
Write
\[
    k=\left\lfloor\frac{s}{4h}\right\rfloor,
    \qquad
    \mu=\Pr[R\cap N\ne\emptyset]=\mathbb E[X_i].
\]

If $d<2h$, then each row belongs to a uniformly random $k$-element set with
probability $k/s$.  A union bound over $N$ yields
\[
    \mu\le\frac{kd}{s}\le\frac{d}{4h}<\frac12.
\]
Exactly as in the first part of the row calculation,
\cref{eq:hoeffding} implies
\[
    \Pr[q\text{ is labeled heavy}]
    \le e^{-L/8}.
\]

Suppose instead that $d\ge16h$.  Since $s/(4h)\ge1$, we have
\[
    k=\left\lfloor\frac{s}{4h}\right\rfloor
    \ge\frac{s}{8h}.
\]
If $s-d<k$, a uniform $k$-element set cannot avoid $N$.  Otherwise, its
probability of avoiding $N$ satisfies
\[
    1-\mu
    =\frac{\binom{s-d}{k}}{\binom{s}{k}}
    =\prod_{j=0}^{k-1}\left(1-\frac{d}{s-j}\right)
    \le\left(1-\frac{d}{s}\right)^k
    \le e^{-dk/s}
    \le e^{-2}.
\]
Consequently, $\mu\ge1-e^{-2}$.  Applying the lower-tail bound in
\cref{eq:hoeffding} gives
\[
\begin{aligned}
    \Pr[q\text{ is labeled light}]
    &\le \exp\left(-2L\left(\mu-\frac34\right)^2\right)\\
    &\le \exp\left(-2L\left(\frac14-e^{-2}\right)^2\right)
     \le e^{-L/50},
\end{aligned}
\]
where the last inequality uses
$2(\frac14-e^{-2})^2>1/50$.
\end{proof}

%----------------------------------------------------------------

\end{document}